\documentclass[a4paper,11pt]{article}
\usepackage{jheppub} 
\usepackage{lineno}
\usepackage{amsmath, mathtools, amssymb, hyperref, xcolor, amsthm, pdfpages, tensor, tikz-cd}

\newcommand{\extp}{\textstyle{\bigwedge^{\!\bullet}}}

\newcommand{\diff}{\mathrm{d}}
\newcommand{\pdiff}[2]{\frac{\partial #1}{\partial #2}}
\newcommand{\tr}{\operatorname{tr}}

\newcommand{\ddt}{\left.\frac{\diff}{\diff t}\right\vert_{t=0}}

\newcommand{\trg}{\diamondsuit}
\newcommand{\gh}{{\mathrm{gh}}}

\DeclareMathOperator{\vol}{vol}

\newtheorem{theorem}{Theorem}
\newtheorem{lemma}{Lemma}
\newtheorem{corollary}{Corollary}
\theoremstyle{definition}

\title{Einstein-Cartan and Palatini-Cartan Gravity from Graded Poisson Geometry}

\author{Johannes Moerland}
\emailAdd{jmoerland@constructor.university}
\author{and Peter Schupp}
\emailAdd{pschupp@constructor.university}
\affiliation{Constructor University Bremen,\\
Campus Ring 1, 28759 Bremen, Germany}

\abstract{Using notions from graded Poisson geometry, we develop a framework that allows for concise formulations of field theories such as Einstein-Cartan and Palatini-Cartan gravity. We show that Einstein-Cartan gravity is equivalent to a variant of Palatini-Cartan gravity in which the vielbein fields are constrained to satisfy appropriate symmetry relations, which we refer to as \emph{constrained vielbein} gravity. To that end, we  relate these theories via a diffeomorphism of graded manifolds. We then show how to obtain the full Palatini-Cartan theory by removing the symmetry constraint.
Finally, we show how our graded geometric framework allows for a direct incorporation of the minimal BV extension of the graded constrained vielbein and Palatini-Cartan theories.}

\begin{document}
\maketitle
\flushbottom

\section{Introduction}
First-order gravitational field theories cast the metric structure of spacetime and the connection as a priori independent fields. As such, the Einstein-Cartan (EC) and Palatini-Cartan (PC) theories provide suitable frameworks for describing spinning particles in a gravitational field~\cite{hehl1976general}. While the EC theory is classically formulated in terms of a metric $g_{\mu\nu}$, the PC theory is conventionally built around a vielbein field $\tensor{e}{_\mu^i}$. The theories can then be related by $g_{\mu\nu} = \tensor{e}{_\mu^i} \eta_{ij} \tensor{e}{_\nu^j}$, where $\eta_{ij}$ is the Minkowski metric.

It is classically known that the EC and PC theories lead to equivalent field equations. Since vielbein fields generally contain more information than the metric, one should not expect the full variational principles to be equivalent. In this paper, we show that the variational problem of EC gravity is equivalent to a variant of PC gravity where vielbein fields satisfy appropriate symmetry constraints. We shall call this theory \emph{constrained vielbein (CV)} gravity. We then show how upon removing these constraints, one recovers the full PC theory.

To do so, it proves useful to cast the field theories in the language of graded geometry.
It is well-known that graded geometry efficiently describes physical theories in the case where the underlying spacetime manifold is Minkowski space~\cite{medeiros2003exotic}, only local expressions are considered~\cite{chatzistavrakidis2017tensor,chatzistavrakidis2020graded,chatzistavrakidis2020unified}, or that the manifold is equipped with a fixed metric and the associated Levi-Civita connection~\cite{bruce2018graded, deser2018extended}. However, techniques for dealing with dynamical metrics and connections are less developed. We demonstrate how graded Poisson geometry~\cite{roytenberg2002structure, cattaneo2006graded,basile2025supersymmetric} provides a natural framework for addressing these dynamical geometric structures. To that end, we give an intrinsic characterization of the relevant graded Poisson structures on graded manifolds. The resulting graded geometric formulation is a mixed symmetry generalization of differential forms and works on arbitrary curved manifolds. As such, it is as powerful in tensor theories including gravity as differential forms are in gauge theories. The geometric structures streamline many computations and directly allow for an incorporation of the minimal BV extension of the PC theory.

After establishing the relevant structures, we formulate the graded EC theory on $\mathcal M = T[1]M \oplus T[1]M$. The action functional takes the form
\begin{equation}
    S_\textrm{EC}(g,D) = \frac{2}{(n-2)!} \int_\mathcal M g^{n-2} R_D \mu_g,
\end{equation}
where $g$ is a function on $\mathcal M$ that encodes a Poisson tensor, $\mu_g$ is a Berezin measure, $D$ is a Poisson derivation, and $R_D$ measures the failure of $D^2 = 0$.

The graded CV and PC theories, on the other hand, are formulated on the graded manifold $\mathcal N = T[1]M \oplus T^*[1]M$. The former may be obtained by pulling back the EC theory on $\mathcal M$ along a suitable diffeomorphism of graded manifolds $\Phi:\mathcal N\to\mathcal M$ constructed from the Poisson structure. After a field redefinition, the resulting action functional can then be written as
\begin{equation}
    S_\textrm{cv}(e,D) = \frac{1}{(n-2)!}\int_\mathcal N e^{n-2} R_D [dx\,d\theta\,d\chi],
\end{equation}
where $e$ is a constrained function on $\mathcal N$ and encodes a vielbein field, $D$ is again a Poisson derivation with the failure of $D^2=0$ captured by $R_D$, and $[dx\,d\theta\,d\chi]$ denotes the flat Berezin measure induced by a choice of coordinates. 

Upon removing the constraint on $e$, one recovers the graded PC theory. Dropping the flat Berezin measure from notation, the action functional reads
\begin{equation}
    S_\textrm{PC}(e,D) = \frac{1}{(n-2)!}\int_\mathcal N e^{n-2} R_D,
\end{equation}
where $e$ is now a function on $\mathcal N$ corresponding to an \emph{unconstrained} vielbein field, $D$ is again a compatible Poisson derivation, and $R_D$ quantifies the failure of $D^2=0$.

Since the gravitational theories we consider are invariant under diffeomorphisms of $M$, we construct the appropriate Batalin-Vilkovisky extension of graded PC action.

This paper is structured as follows: We first review some basic notions from graded geometry in section~\eqref{sec:graded-geometry}. In section~\eqref{sec:metric-gravity}, we specialize the general notions to $\mathcal M$ and construct the required graded geometric structures on curved space-time. Using these tools, we construct the graded Einstein-Cartan action functional. In section~\eqref{sec:vielbein-gravity}, we use these results to derive the constrained vielbein theory and show how to obtain the Palatini-Cartan theory. Finally, we construct the minimal BV extension of the graded CV theory.

\section{Graded Geometry}\label{sec:graded-geometry}
In this section, we introduce the fundamental concepts and tools that will be used throughout the remainder of this paper. We give a short conceptual introduction fitted to our needs, more detailed constructions can be found in, e.g.,~\cite{batchelor1984graded, fairon2017introduction}. For each vector bundle $E\to M$ of rank $r$ and each odd integer $k$, let us consider the graded manifold $T[1]M \times_M E[k]$, where $\times_M$ denotes the fiber product over $M$. By minor abuse of notation, this graded manifold is more commonly written as $T[1]M \oplus E[k]$, which we shall denote by $\mathcal E$ in the following. Locally, we can equip $\mathcal E$ with coordinates $(x^\mu,\theta^\nu,\xi^i)$, where $x^\mu$ are the degree 0 coordinates on $M$, $\theta^\nu$ correspond to the directions originating from $T[1]M$, and $\xi^i$ correspond to the directions coming from $E[k]$. The algebra of functions is locally isomorphic to
\begin{equation}
\mathcal O_\mathcal E (U) = C^\infty(U)[\theta^1,\dots,\theta^n; \xi^1,\dots,\xi^r] = C^\infty(U) \otimes \extp (\mathbb R^n)^* \otimes\extp(\mathbb R^r)^*,
\end{equation}
where $U\subset M$ is an open subset. In particular, all graded generators are taken to mutually commute in the graded sense
\begin{equation}
\theta^\mu\theta^\nu + \theta^\nu\theta^\mu = 0,\quad \theta^\mu\xi^i + \xi^i\theta^\mu = 0,\quad  \xi^i\xi^j + \xi^j\xi^i = 0.
\end{equation}
A general graded function may be expanded in graded coordinates as
\begin{equation}\label{eq:graded-function-expansion}
a = \sum_{M,I}a_{MI}\theta^M \xi^I,
\end{equation}
where the coefficients $a_{MI}$ are smooth functions that only depend on the $x^\mu$ coordinates, $M = (\mu_1,\dots,\mu_m)$ and $I = (i_1,\dots,i_\ell)$ are ordered multiindices (i.e., $1\le\mu_1 <\mu_2<\dots <\mu_m\le n$ and $1\le i_1<i_2<\dots<i_\ell \le r$), and
\begin{equation}
\theta^M = \theta^{\mu_1} \dots \theta^{\mu_m},\quad \xi^I = \xi^{i_1}\dots\xi^{i_\ell}.
\end{equation}

\subsection{Graded Vector Fields}\label{sec:graded-vector-fields}
A vector field on the graded manifold can be defined as a first order differential operator, that is, an operator $X$ satisfying the graded Leibnitz rule
\begin{equation}
X(ab) = X(a)b + (-1)^{|a| |X|} aX(b),
\end{equation}
where $|a|$ denotes the total degree of $a$. Any such vector field can be expanded in local coordinates according to
\begin{equation}
X = X^\mu_{(x)} \pdiff{}{x^\mu} + X^\nu_{(\theta)} \pdiff{}{\theta^\nu} + X^i_{(\xi)}\pdiff{}{\xi^i},
\end{equation}
where $\pdiff{}{\theta^\nu}$ has degree $-1$ in $\theta$, $\pdiff{}{\xi^i}$ has degree $-|\xi^i|$ in $\xi$, and the coefficients $X^\mu_{(x)}, X^\nu_{(\theta)}, X^i_{(\xi)}$ are graded functions that may be expanded according to equation~\eqref{eq:graded-function-expansion}. The \emph{graded commutator} of two graded vector fields is defined as
\begin{equation}
    [X,Y] \coloneqq X\circ Y - (-1)^{|X||Y|} Y\circ X,
\end{equation}
and is again a graded vector field of degree $|X|+|Y|$. Note that an odd vector field does not necessarily fulfill $[X,X] = 0$. If it does, $|X|$ is called \emph{homological}.\par
As an example, note that the exterior derivative on $M$ is a homological vector field of degree $+1$ on $T[1]M$: Let us choose local coordinates $(x^\mu,\theta^\nu = \diff x^\nu)$. We may then express $\diff$ locally as
\begin{equation}
\diff = \theta^\mu \partial_\mu.
\end{equation}
We would like to extend this vector field to some vector field $D$ on $\mathcal E$ of degree 1 in $\theta$. To do so, we make the generic ansatz
\begin{equation}
Dx^\mu = \theta^\mu,\quad D \theta^\nu = 0, \quad D\xi^i = \tensor\phi{_\mu_j^i}\theta^\mu \xi^j,
\end{equation}
or equivalently,
\begin{equation}
D = \theta^\mu\pdiff{}{x^\mu} + \theta^\mu\tensor{\phi}{_\mu_j^i}\xi^j \pdiff{}{\xi^i}.
\end{equation}
Note the sign convention: The coefficients $\tensor{\phi}{_\mu_m^\ell}$ act on coordinates with upper indices. If the graded manifold is also equipped with dual coordinates $\psi_1,\dots,\psi_n$, we have $D\psi_i = -\theta^\mu\tensor{\phi}{_\mu_i^k} \psi_k$ such that $D(\xi^i\psi_i) = 0$.\par
We would like to see how this vector field transforms under a change of the $\xi$ coordinates $\xi^i\to (\xi')^i = A^i_j \xi^j$, where $A$ is an invertible matrix that only depends on the $x$ coordinates. We introduce the shorthand notation $\xi = (\xi^1,\dots,\xi^r)$, such that $\xi' = A\xi$. Demanding linearity and the Leibnitz rule, we find
\begin{equation}
D' \xi' = (D \xi )' = A (D \xi) = A D (A^{-1}\xi') = A D (A^{-1}) \xi' + D \xi',
\end{equation}
or $D' = AD(A^{-1}) + D$. It follows that the coefficients $\tensor\phi{_\mu_k^i}$ transform as
\begin{equation}\label{eq:d-extension-transformation}
\tensor\phi{_\mu_k^i} \to A^i_\ell \tensor\phi{_\mu_m^\ell} (A^{-1})^m_k + A^i_\ell \pdiff{}{x^\mu}(A^{-1})^\ell_k,
\end{equation}
that is, the extension $D$ of $\diff$ corresponds precisely to a linear connection on $E$. We shall call such vector fields \emph{$\diff$-extensions}. Note that in contrast to $\diff$, $D$ is generally not homological. Instead, we have
\begin{equation}
D^2 = \frac12[D,D] = (\partial_\mu \tensor\phi{_\nu_k^\ell} + \tensor\phi{_\mu_k^m} \tensor\phi{_\nu_m^\ell})\theta^\mu\theta^\nu \xi^k \pdiff{}{\xi^\ell} \eqqcolon \frac12 \tensor{(F_D)}{_\mu_\nu_k^\ell} \theta^\mu\theta^\nu\xi^k\pdiff{}{\xi^\ell},
\end{equation}
where $F_D$ measures how `non-homological' $D$ is. This function may be identified with the curvature of the connection corresponding to $D$. For the components, we assume without loss of generality that $\tensor{(F_D)}{_\mu_\nu_k^\ell} = - \tensor{(F_D)}{_\nu_\mu_k^\ell}$.
\subsection{Graded Poisson Structures}
Let us briefly discuss some of the defining properties of graded Poisson structures. More details may be found in~\cite{roytenberg2002structure,cattaneo2006graded,basile2025supersymmetric} and references therein.
We consider the graded manifold $\mathcal E$, and write $ (X^A) =(x^\mu, \theta^\nu,\xi^k)$ for the local coordinates. We also fix a not necessarily positive integer $\ell \in\mathbb Z$ (indeed, in the gravitational theories we treat, $\ell$ will be $-2)$.
A \emph{graded Poisson bracket of degree $\ell$} is a bracket
\begin{equation}
    \{-,-\}:\mathcal O_\mathcal E^p \times \mathcal O_\mathcal E^q \to \mathcal O_\mathcal E^{p+q+\ell}, 
\end{equation}
satisfying graded skew-symmetry $\{a,b\} = -(-1)^{(|a|+\ell)(|b| + \ell)}\{b,a\}$, the graded Leibniz rule
\begin{equation}
    \{ab,c\} = a\{b,c\} + (-1)^{|b||a|}b\{a,c\},
\end{equation}
and the graded Jacobi identity
\begin{equation}
    \{a,\{b,c\}\} = \{\{a,b\},c\} + (-1)^{(|a|+\ell)(|b|+\ell)}\{b,\{a,c\}\}.
\end{equation}
Such a bracket may be encoded in a bivector
\begin{equation}
    \pi = \frac12\pi^{AB}\pdiff{}{X^A}\wedge \pdiff{}{X^B},
\end{equation}
and the graded Jacobi identity is equivalent to $[\pi,\pi] = 0$, where $[-,-]$ is the Schouten bracket of vector fields~\cite{cattaneo2006graded}. On the graded coordinates, the Poisson bracket reads
\begin{equation}
    \{X^A,X^B\} = \pi^{AB},
\end{equation}
and may be extended to more general functions by means of the Leibniz rule.

A \emph{Poisson vector field} is a vector field which is compatible with the bracket, that is,
\begin{equation}
    D\{a,b\} = \{Da,b\} + (-1)^{|D|(k+|a|)}\{a,Db\},
\end{equation}
or equivalently, $[D,\pi] = 0$ using the Schouten bracket. When we come to the gravitational theories, we will be interested in $\diff$-extensions that are Poisson vector fields, henceforth referred to as~\emph{Poisson $\diff$-extensions}.

\subsection{Berezin Integrals}\label{sec:berezin-integrals}
Let us briefly summarize the mechanism behind \emph{Berezin integration}. This is a notion of computing integrals on graded manifolds that are concentrated in odd degrees and degree 0. There seems to be no consensus as to how integrals over non-zero even degree directions should be formulated, but fortunately, we do not require this in our treatment.

The flat case is well-known in physics for its application in quantum field theory, and a covariant formulation thereof is summarized in the appendix. We now introduce Berezin integrals on graded manifolds that are curved and do not have a preferred set of coordinates.
Given a curved graded manifold $\mathcal M$, the \emph{orientation-twisted Berezinian line bundle} $I(\mathcal M) \to \mathcal M$ can be constructed. The
\emph{density sheaf} is defined as $\mathcal I_\mathcal M = \Gamma(I(\mathcal M))$, and its sections can be integrated invariantly. The defining property of $I(\mathcal M)$ is that sections $\mu\in\mathcal I_\mathcal M$ transform under diffeomorphisms $\Phi$ as follows,
\begin{equation}
	\Phi^*(\mu) = \operatorname{or(\Phi)}\operatorname{ber}(J\Phi) \mu \circ\Phi,
\end{equation}
where $\operatorname{or}(\Phi)$ is the orientation of $\Phi$, $J\Phi$ is the Jacobian of $\Phi$, and $\operatorname{ber}(J\Phi)$ is the Berezinian of $J\Phi$. For details, see appendix~\ref{sec:app-berezin}, where the integral operator is constructed explicitly in local coordinates. The global integral operator may be defined in charts, where it reduces to the local description. By diffeomorphism invariance of the local picture, we can use a partition of unity to glue the local integral operators to a global integral operator
\begin{equation}
    \int_\mathcal M: \mathcal  I_\mathcal M \to \mathbb R.
\end{equation}
A nowhere vanishing section $\mu\in\mathcal I_\mathcal M$ provides an isomorphism
\begin{equation}
    \mathcal O_\mathcal M \to \mathcal I_\mathcal M,\quad a\mapsto a\mu,
\end{equation}
allowing us to integrate functions. Indeed, we find
\begin{equation}
    \Phi^*(a\mu) = \Phi^*(\mu)(a\circ\Phi) = \operatorname{or}(\Phi) \operatorname{ber}(D\Phi) (a\mu)\circ\Phi.
\end{equation}
Since $\operatorname{ber}(T\mathcal M)$ is a line bundle, any choice of local coordinates $(x^i,\xi^j)$ on $\mathcal U \cong U\times\mathbb R^s[k]$ allows us to write $\mu|_U = \rho [\diff\xi\,\diff x]$ for some function $\rho\in\mathcal O_\mathcal M(U)$, and
\begin{equation}
    \int_\mathcal U a \vol = 
    \int_U \iota^*\left(\pdiff{}{\xi^1}\dots\pdiff{}{\xi^s } (a\rho)\right) d^n x,
\end{equation}
where $d^nx =d\lambda (x)$ is the Lebesgue measure, and $\iota: U\to\mathcal U$ is the inclusion.
\paragraph{On $\mathbf{T[1]M}$.} Recall that any diffeomorphism $\Phi$ of $M$ extends to a diffeomorphism $(\Phi,J\Phi)$ of $T[1]M$. Indeed, under $x^\mu \to y^\mu = \Phi^\mu(x)$, we have $\theta^\mu \to \chi^\mu = (J\Phi)^\mu_\nu \theta^\nu$.

There exists a canonical Berezin measure on $T[1]M$ which is invariant under such diffeomorphisms $(\Phi,J\Phi)$. In local coordinates $(x^\mu, \theta^\nu = \diff x^\nu)$, it is explicitly given by
\begin{equation}
[dx^1\,\dots\, dx^n\, d\theta^1\,\dots\, d\theta^n].
\end{equation}
To verify invariance, we use description~\eqref{eq:berezin-pdiff} from appendix~\ref{sec:app-berezin}, and compute
\begin{gather}
\diff x^1\dots \diff x^n \to \diff y^1 \dots \diff y^n = \det(J\Phi)\,\diff x^1\dots \diff x^n,\\
\pdiff{}{\theta^s}\dots\pdiff{}{\theta^1} \to \pdiff{}{\chi^s}\dots\pdiff{}{\chi^1} 
= \det (J\Phi)^{-1}\pdiff{}{\theta^s}\dots\pdiff{}{\theta^1},
\end{gather}
such that the factor $\det(J\Phi)$ cancels out.\par
In local coordinates, functions on $T[1]M$ are integrated according to
\begin{equation}
\int_{T[1]M} \sum_N a_N \theta^N  [dx^1\,\dots\, dx^n\, d\theta^1\,\dots\, d\theta^n] = \int_M \left(\pdiff{}{\theta^n}\dots\pdiff{}{\theta^1} a \right)[dx^1\dots dx^n] = \int_M a_{1\dots n} d^n x,
\end{equation}
where the sum ranges over ordered multiindices $N$.
Thus, locally, Berezin integration on $T[1]M$ entails replacing $\theta^\mu$ with $dx^\mu$, and subsequently computing the Lebesgue integral. This amounts precisely to the invariant integration of differential forms on $M$.

\paragraph{On $\mathbf{T[1]M \oplus E[k]}$.}
Let us now consider the graded manifold $\mathcal E = T[1]M \oplus E[k]$ for $k$ odd. Let $r$ be the rank of $E$, such that $\Gamma(M,E)$ can be generated locally over $C^\infty(M)$ by a local basis $(\xi_1,\dots,\xi_r)$. Dually, $\mathcal E$ comes with coordinates $(x^\mu, \theta^\nu, \xi^i)$. This means that an arbitrary function $f\in\mathcal O_\mathcal E$ may be expanded locally in graded coordinates as
\begin{equation}
a = \sum_{M,I} a_{MI}\theta^M \xi^I
\end{equation} 
By picking a generator $\omega = \frac1{r!}\omega^{i_1\dots i_r} \xi_{i_1} \dots \xi_{i_r} = \rho \xi_1\dots\xi_r$ of the determinant line $\det(E)$, we can define a Berezin measure $\mu$ on $\mathcal E$ by setting
\begin{equation}
\mu = [dx\, d\theta\, \omega] = \rho [dx\,d\theta\, d\xi].
\end{equation}
More explicitly, integrals may be computed as follows
\begin{equation}
\int_\mathcal M a \mu_g = \int_{T[1]M} \left(\pdiff{}{\xi^r}\dots\pdiff{}{\xi^1} a \rho \right)[dx\,d\theta],
\end{equation}
with the integral over $T[1]M$ computed as above.

\section{Graded Einstein-Cartan Gravity}\label{sec:metric-gravity}
We now specialize the setup to the case where a (pseudo) Riemannian metric on $M$ is the dynamical field. To this end, it turns out to be useful to consider the graded manifold $\mathcal M = T[1]M\oplus T[1]M$ with coordinates $(x^\mu, \theta^\nu, \tilde\theta^\lambda)$. As before, we use holonomic degree 1 coordinates $\theta^\mu = \diff x^\mu$, and likewise $\tilde\theta^\nu = \widetilde{\diff x^\nu}$. On this graded manifold, the metric may be intrinsically understood in the context of graded Poisson geometry.

\subsection{\texorpdfstring{The Graded Manifold ${T[1]M\oplus T[1]M}$}{The Graded Manifold T[1]M+T[1]M}}
In contrast to the Minkowski case, the coordinates $(x^\mu,\theta^\nu,\tilde\theta^\lambda)$ can only be chosen locally. Indeed, the algebra of functions on $\mathcal M $ is locally isomorphic to
\begin{equation}
C^\infty(U)[\theta^\mu,\tilde\theta^\nu] = C^\infty(U)\otimes \left(\extp (\mathbb R^n)^* \right)^{\otimes 2},
\end{equation}
where $U\subset M$ is a contractible patch, and the graded coordinates $\theta^\nu = \diff x^\mu$, $\tilde\theta^\lambda = \widetilde{\diff x^\lambda}$ coming from the fiber directions of the first resp.~second copy of $T[1]M$ satisfy
\begin{equation}
\theta^\nu\theta^\lambda = -\theta^\lambda\theta^\nu,\quad \tilde\theta^\nu\tilde\theta^\lambda = -\tilde\theta^\lambda\tilde\theta^\nu,\quad \tilde\theta^\nu\theta^\lambda = -\theta^\lambda\tilde\theta^\nu.
\end{equation}
Graded functions on $\mathcal M$ carry two independent degrees, coming from the degree in the $\theta$ and the $\tilde\theta$ coordinates. We write $\mathcal O_\mathcal M$ for the algebra of functions, and $\mathcal O^{p,q}_\mathcal M$ for functions that are homogeneous of degree $p$ in $\theta$ and homogeneous of degree $q$ in $\tilde\theta$. 
\subsection{Geometric Structures}\label{sec:geometric-structures}
We now discuss the graded geometric structures on $\mathcal M$. We will characterize these structures intrinsically in the language of graded geometry.
\paragraph{The Transposition Operation.}
The algebra of functions $\mathcal O _\mathcal M$ comes with an invariant involution which exchanges the roles of $\theta$ and $\tilde\theta$. More explicitly, it is defined on the generators by $\widetilde{\theta^\mu} = \tilde\theta^\mu$ and $\widetilde{\tilde\theta^\mu} = \theta^\mu$. This extends to arbitrary graded functions as follows: Let $a = \sum_{MN}a_{MN}\theta^M \tilde\theta^N$. Then
\begin{equation}
\tilde a = \sum_{MN}a_{MN}\tilde\theta^M \theta^N = \sum_{MN}{(-1)^{|M||N|}}a_{MN} \theta^N \tilde\theta^M =
\sum_{MN}{(-1)^{|M||N|}}a_{NM} \theta^M \tilde\theta^N,
\end{equation}
where we renamed the summation indices in the final step. Thus, for the components, it holds $ \tilde a_{MN} = (-1)^{|M||N|} a_{NM}$.
\paragraph{The Poisson bracket.}
For graded Einstein-Cartan gravity, we require a Poisson structure of degree $-2$ in $\tilde\theta$ with symplectic leaves given precisely by the submanifolds along fixed $x$ and $\theta$. The  structure may be encoded in a bivector
\begin{equation}
	\pi = \frac12 g^{\mu\nu}\pdiff{}{\tilde\theta^\mu}\wedge\pdiff{}{\tilde\theta^\nu},
\end{equation}
where the components $g^{\mu\nu}$ are invertible. In other words, the Poisson structure corresponds uniquely to an inverse metric on $M$. 
The bivector clearly fulfills $[\pi,\pi] = 0$ as the coefficients $g^{\mu\nu}$ do not depend on the $\tilde\theta$ coordinates. The non-vanishing brackets of the coordinate functions read
\begin{equation}
\{\tilde\theta^\mu,\tilde\theta^\nu\} = g^{\mu\nu}.
\end{equation}
Brackets of higher degree objects can be calculated inductively using the graded Leibnitz rule. As the components $g^{\mu\nu}$ are invertible, we may associate a \emph{structure function}
\begin{equation}
    g_{\mu\nu}\theta^\mu\tilde\theta^\nu\in\mathcal O^{1|1}_\mathcal M
\end{equation}
with the Poisson bivector. We also have a natural associated second order differential operator
\begin{equation}\label{eq:square-operator}
\trg = g^{\mu\nu}\pdiff{}{\theta^\mu}\pdiff{}{\tilde\theta^\nu},
\end{equation}
which pairs $\theta$ with $\tilde\theta$ and can be viewed as a metric trace.
\par
The degree 2 functions of $\tilde\theta^\mu$ generate the Lorentz algebra associated with $g$, and their action on the $\tilde\theta^\mu$ coordinates via $\rho(\lambda) f = \{\frac12\lambda_{\mu\nu}\tilde\theta^\mu\tilde\theta^\nu, f\}$ reflects the action of the Lorentz algebra on $\mathcal O_\mathcal M$.
We can also encode the Clifford algebra action on the $\tilde\theta^\mu$ coordinates: Clifford multiplication is given by 
\begin{equation}
	c(\tilde\theta^\mu)(f) = \tilde\theta^\mu f + \{\tilde\theta^\mu,f\},
\end{equation}
such that 
\begin{equation}
	c(\tilde\theta^\mu)\circ c(\tilde\theta^\nu) + c(\tilde\theta^\nu)\circ c(\tilde\theta^\mu)  = 2g^{\mu\nu}.
\end{equation}
We remark at this point that the Clifford action breaks the $\mathbb Z$ grading down to a mere $\mathbb Z_2$ grading as Clifford multiplication only preserves the latter.

Analogously, we can consider the bivector $\pi' = \frac12 g^{\mu\nu}\pdiff{}{\theta^\mu}\wedge\pdiff{}{\theta^\nu}$ with associated Poisson bracket $\{\theta^\mu,\theta^\nu\}' = g^{\mu\nu}$. This bracket has degree $-2$ in $\theta$, and can be used to define the Clifford and Lorentz action on the $\theta^\mu$ coordinates.\par
Note that this structure intertwines the action of diffeomorphisms with that of Lorentz transformations. The different roles of $\theta$ and $\tilde\theta$ are made explicit in the vielbein formalism, which we derive in section~\ref{sec:vielbein-gravity}.
\paragraph{Poisson $\mathbf\diff$-extensions.}\label{par:poisson-d-ext}
In a given coordinate chart, we can define the local vector fields
\begin{equation}
\diff = \theta^\mu\pdiff{}{x^\mu},\quad \widetilde\diff =\tilde\theta^\mu \pdiff{}{x^\mu}
\end{equation}
of degree $(1,0)$ resp. $(0,1)$. However, these vector fields do not transform correctly under a change of odd coordinates, hence, they do not define global vector fields on $\mathcal M$. Instead, we shall require suitable `covariantizations' of these differential operators. To that end, we make the general ansatz
\begin{equation}
D = \theta^\mu \pdiff{}{x^\mu} + \theta^\mu\tensor{\phi}{_\mu_\nu^\lambda}\tilde\theta^\nu \pdiff{}{\tilde\theta^\lambda}
\end{equation}
as in section~\ref{sec:graded-vector-fields}, and additionally impose the following compatibility condition with the Poisson bracket
\begin{equation}
D\{f,g\} = \{D a, b\} + (-1)^{|a|} \{a, Db\}.
\end{equation}
In other words, we consider extensions of $D$ that are Poisson derivations. By letting $a=\tilde\theta^\kappa$ and $b=\tilde\theta^\lambda$, we can extract the following relation
\begin{equation}\label{eq:poisson-metricity}
\theta^\mu\partial_\mu g^{\kappa\lambda} = 
\{\tensor{\phi}{_\rho_\sigma^\kappa} \theta^\rho\tilde\theta^\sigma , \tilde\theta^\lambda\} -
\{\tilde\theta^\kappa , \tensor{\phi}{_\rho_\sigma^\lambda} \theta^\rho\tilde\theta^\sigma\} = \theta^\mu \tensor{\phi}{_\mu_\rho^\kappa} g^{\rho\lambda} + \theta^\mu \tensor{\phi}{_\mu_\rho^\lambda} g^{\kappa\rho},
\end{equation}
which is the case if and only if $\varphi$ corresponds to a metric connection ($\nabla g = 0$). In the following, we shall always assume that $D$ is compatible with the Poisson structure. We will call these derivations \emph{Poisson $\diff$-extensions}.\par
Let us evaluate the implications for the curvature. We have
\begin{equation}
    0 = 
    D^2\{\tilde\theta^\kappa,
    \tilde\theta^\lambda\} =
    \{D^2\tilde\theta^\kappa,
    \tilde\theta^\lambda\} + \{\tilde\theta^\kappa,
    D^2\tilde\theta^\lambda\} = \tensor{(F_D)}{_\mu_\nu_\rho^\kappa} g^{\rho\lambda} +
    \tensor{(F_D)}{_\mu_\nu_\rho^\lambda} g^{\kappa\rho},
\end{equation}
that is, $\tensor{(R_D)}{_\mu_\nu_\kappa_\lambda} + \tensor{(R_D)}{_\mu_\nu_\lambda_\kappa} = 0$, where $\tensor{(R_D)}{_\mu_\nu_\kappa_\lambda} = \tensor{(F_D)}{_\mu_\nu_\kappa^\rho}g_{\rho\lambda}$. Let us define
\begin{equation}
    R_D = \frac14 \tensor{(F_D)}{_\mu_\nu_\kappa^\rho} g_{\rho\lambda} \theta^\mu\theta^\nu\tilde\theta^\kappa\tilde\theta^\lambda.
\end{equation}
It follows that 
\begin{equation}
    D^2 a = \{R_D,a\}
\end{equation}
for all graded functions $a$. In plain English, $D^2$ is a Hamiltonian vector field, and $R_D$ is the corresponding Hamiltonian. Since the coefficients $\varphi$ act without the factor of $-1$ on $\tilde\theta^\mu = \widetilde{\diff x}^\mu$, the function $R_D$ corresponds to \emph{minus} the Riemann tensor in the conventions of~\cite{misner1974gravitation}. We will revisit this when we come to the graded gravity theories.\par
Now, let $D'$ be another $\diff$-extension. Denote \begin{equation}
   \delta\phi = D'-D =  \theta^\mu\left(\tensor{(\phi')}{_\mu_\nu^\lambda} -\tensor{\phi}{_\mu_\nu^\lambda}\right) \tilde\theta^\nu \pdiff{}{\tilde\theta^\lambda}
\end{equation}
Then $D'$ is Poisson if and only if
\begin{equation}
    0 =\delta\phi\left(\{\tilde\theta^\nu,\tilde\theta^\lambda\}\right) = \theta^\mu \tensor{(\delta\phi)}{_\mu_\rho^\nu} g^{\rho\lambda} + \theta^\mu \tensor{(\delta\phi)}{_\mu_\rho^\lambda} g^{\kappa\rho},
\end{equation}
or equivalently, $\tensor{(\delta\varphi)}{_\mu_\nu_\lambda} + \tensor{(\delta\varphi)}{_\mu_\lambda_\nu}$, where $(\delta\varphi)_{\mu\nu\lambda} = \tensor{(\delta\phi)}{_\mu_\nu^\kappa} g_{\kappa\lambda}$. In this case, we may write
\begin{equation}
    \delta\varphi (a) = \left\{\frac12\theta^\mu\tensor{(\delta\varphi)}{_\mu_\nu_\lambda}\tilde\theta^\nu\tilde\theta^\nu, a\right\}.
\end{equation}
We conclude that the difference of any two Poisson $\diff$-extensions is a Hamiltonian vector field. If $D$ is Poisson, we will write $\varphi$ instead of $\phi$ for the coefficients.\par
Finally, we note that $Dg = \frac12T_{\mu\nu\kappa}\theta^\mu\theta^\nu\tilde\theta^\kappa$, where $T_{\mu\nu\kappa} = \tensor{T}{_\mu_\nu^\lambda} g_{\kappa\lambda}$ are the components of the torsion of $\nabla$.
\paragraph{The Berezin Measure and Stokes' Theorem.}
As discussed in section~\ref{sec:berezin-integrals}, we require a Berezin measure to integrate functions on graded manifolds. To that end, we use $g$ and define
\begin{equation}
	\mu_g = |\det(g_{\mu\nu})|^{-\frac12} [dx\, d\theta\, d\tilde\theta],
\end{equation}
where, as before, we chose the odd coordinates are chosen such that $\theta^\mu = \diff x^\mu$ and $\tilde\theta^\nu= \widetilde{\diff x^\nu}$. 
We have the following graded incarnation of Stokes' theorem:
\begin{theorem}\label{thm:stokes-M}
Let $\mu_g$ be the Berezin measure constructed above, let $f\in\mathcal O_\mathcal M$, and denote by
\begin{equation}
\mathcal M^\partial = T[1]\partial M \oplus T[1]M|_{\partial M}
\end{equation}
with induced Berezin measure
\begin{equation}
\mu^\partial_g = |\det(g)|^{-\frac12}[dy\, d\psi\, d\tilde\theta],
\end{equation}
where $(y^\mu,\psi^\nu = \diff y^\nu)$ are coordinates on $T[1]\partial M$, and $\tilde\theta^\lambda$ are the restricted degree 1 coordinates on $T[1]M|_{\partial M}$. It holds
\begin{equation}\label{eq:stokes-M}
\int_\mathcal M (Da)\mu_g = \int_{\mathcal M^\partial} a\mu_g^\partial.
\end{equation}
\end{theorem}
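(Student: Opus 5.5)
The plan is to reduce the statement to the classical Stokes theorem for differential forms on $M$, with values in the bundle $\extp T^*M$ coming from the second copy of $T[1]M$. The key observation is that integrating over the $\tilde\theta$ directions against $|\det g|^{-1/2}[d\tilde\theta]$ is a fibrewise linear functional on $\extp T^*_xM$. This functional picks out the top $\tilde\theta$-component. Up to orientation-twisting, it is contraction with the metric volume element on the dual bundle: the density $|\det g|^{-1/2}$ is exactly what makes $\tilde\theta^1\cdots\tilde\theta^n\mapsto |\det g|^{-1/2}$ coordinate independent, since $\tilde\theta\mapsto J\tilde\theta$ produces a factor $\det J$ while $|\det g|^{-1/2}$ produces a factor $|\det J|^{-1}$.

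First, I decompose $a=\sum_q a^{(q)}$ by $\tilde\theta$-degree. Only the component with $\tilde\theta$-degree $n$ and $\theta$-degree $n-1$ contributes to the left-hand side. This is because $D$ has degree $(1,0)$, and the Berezin integral extracts the coefficient of $\theta^1\cdots\theta^n\tilde\theta^1\cdots\tilde\theta^n$. Locally, I write this component as $a = \alpha\,\tilde\theta^1\cdots\tilde\theta^n$, with $\alpha$ an $(n-1)$-form in $\theta$. Then I compute
\begin{equation*}
D(\alpha\,\tilde\theta^1\cdots\tilde\theta^n) = (\diff\alpha)\,\tilde\theta^1\cdots\tilde\theta^n + (-1)^{n-1}\alpha\, D(\tilde\theta^1\cdots\tilde\theta^n),
\end{equation*}
with $D(\tilde\theta^1\cdots\tilde\theta^n) = \theta^\mu\tensor{\varphi}{_\mu_\nu^\nu}\,\tilde\theta^1\cdots\tilde\theta^n$. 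This is the trace of the connection acting on the top form.

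Next, I use the Poisson (metric) condition~\eqref{eq:poisson-metricity}. Contracting it with $g_{\kappa\lambda}$ gives $\partial_\mu\log|\det g^{..}| = 2\tensor{\varphi}{_\mu_\rho^\rho}$, i.e. $\tensor{\varphi}{_\mu_\rho^\rho} = -\partial_\mu \log|\det g_{..}|^{1/2}$. Hence
\begin{equation*}
\iota^*\Big(\pdiff{}{\tilde\theta^n}\cdots\pdiff{}{\tilde\theta^1}\,|\det g|^{-1/2} Da\Big) = \diff\big(|\det g|^{-1/2}\alpha\big)
\end{equation*}
up to a fixed sign convention. So the integrand is an exact $n$-form on $M$. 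The essential point is that the $\tilde\theta$-fibre integral intertwines $D$ with $\diff$ precisely because $D$ is metric and $\mu_g$ is built from $g$. Both objects are global, so the local identity glues to a global one, $\int_{\tilde\theta}(Da)\mu_g = \diff\omega_a$. Here $\omega_a$ is the globally defined $(n-1)$-form on $M$ obtained by fibre-integrating $a$ against $|\det g|^{-1/2}[d\tilde\theta]$.

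Finally, I apply the ordinary Stokes theorem, $\int_M \diff\omega_a = \int_{\partial M}\iota^*\omega_a$. I then recognise the right-hand side as $\int_{\mathcal M^\partial} a\,\mu^\partial_g$. Restricting to $T[1]\partial M$ replaces $\theta$ by $\psi=\diff y$, which is the pullback of the form part. The $\tilde\theta$ integration commutes with this restriction because $T[1]M|_{\partial M}$ keeps all $n$ fibre directions and the same density $|\det g|^{-1/2}$.

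I expect the main obstacle to be the sign and orientation bookkeeping rather than anything conceptual. This includes the order of the $\pdiff{}{\tilde\theta}$'s relative to the $\theta$'s, the $(-1)^{n-1}$ from moving $D$ past $\alpha$, and the outward-normal convention on $\partial M$. I would fix these by checking the case $M=\mathbb R^n_{\ge0}$ with flat $g$ and $\varphi=0$ first. Everything else is then covariant.
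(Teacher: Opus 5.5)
Your proposal is correct and is essentially the paper's proof. The paper shows $L_D\mu_g=0$ by checking $D(|\det g|^{1/2}\tilde\theta^1\cdots\tilde\theta^n)=0$, using the same metricity trace identity $\tensor{\varphi}{_\mu_\rho^\rho}=-\partial_\mu\log|\det g|^{1/2}$; it then invokes its appendix Stokes theorem, whose proof is exactly your step of integrating out the $\tilde\theta$ directions so that $D$ becomes $\diff$, followed by ordinary Stokes on $T[1]M$. You have inlined that general theorem, and you treat the sign and boundary-orientation conventions more explicitly than the paper, which leaves them implicit.
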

\begin{proof}
We would like to apply theorem~\eqref{thm:stokes-graded-general}. To do so, we need to show that $L_D (\mu_g) = 0$. Recall that on functions, $L_D(f) = D(f)$. The coordinate functions thus satisfy
\begin{equation}
L_D x^\mu = \theta^\mu, \quad L_D\theta^\nu = 0,\quad L_D \tilde\theta^\lambda =  \tensor\varphi{_\mu_\nu^\lambda} \theta^\mu\tilde\theta^\nu.
\end{equation}
The partial measure $[dx\,d\theta]$ is invariant as $L_D[dx \,d\theta] = \pdiff{\theta^\mu}{x^\mu}[dx\,d\theta] = 0$, and so is
\begin{equation}
|\det(g_{\mu\nu})|^{\frac12}\tilde\theta^1\dots\tilde\theta^n |\det(g_{\mu\nu})|^{-\frac12}[d\tilde\theta^1\dots d\tilde\theta^n]
\end{equation}
as the $\det(g_{\mu\nu})$ factors cancel, and $\tilde\theta^\mu$ transforms precisely dually to $[d\tilde\theta^\mu]$.
Hence, instead of computing the Lie derivative of the measure, we may calculate the simpler expression $D(|\det(g_{\mu\nu})|^{\frac12}\tilde\theta^1\dots\tilde\theta^n)$. We first evaluate
\begin{equation}
L_D(\tilde\theta^1\dots\tilde\theta^n) = \theta^\mu\left(\sum_\lambda \tensor{\varphi}{_\mu_\lambda^\lambda}\right)\tilde\theta^1\dots\tilde\theta^n.
\end{equation}
By multiplying equation~\eqref{eq:poisson-metricity} with $g_{\kappa\lambda}$, we have $D\tr(g_{\kappa\lambda}) = g^{\kappa\lambda}D g_{\kappa\lambda} = -2\tr(\varphi)$.
On the other hand, we have
\begin{equation}
L_D\left(|\det(g_{\mu\nu})|^{-\frac12}\right) = \frac12 \theta^\mu\partial_\mu\left(\left(\tr(g^{\mu\nu}) \right)|\det(g_{\mu\nu})|^{-\frac12}\right),
\end{equation}
proving that $L_D(|\det(g_{\mu\nu})|^{\frac12}\tilde\theta^1\dots\tilde\theta^n) = 0$. Putting everything together, it follows that the Lie derivative of the total Berezin measure vanishes $L_D(|\det(g_{\mu\nu})|^{-\frac12}[dx\,d\theta\,d\tilde\theta])=0$, such that equation~\eqref{eq:stokes-M} follows directly from theorem~\eqref{thm:stokes-graded-general}.
\end{proof}
Now, let $g'$ be a nearby Poisson structure such that $g'= g+\delta g$, which we may parametrize by some symmetric $\tensor{\xi}{_\mu^\nu}$ such that to first order,
\begin{equation}
g'_{\mu\nu} = (\tensor\delta{_\mu^\kappa}+\tensor\xi{_\mu^\kappa})g_{\kappa\lambda} (\tensor\delta{_\nu^\lambda}+\tensor\xi{_\nu^\lambda})= g_{\mu\nu} + g_{\mu\lambda} \tensor\xi{_\nu^\lambda} + \tensor\xi{_\mu^\kappa}g_{\kappa\nu} + \mathcal O(\xi^2),
\end{equation}
from which we can read off that $\delta g_{\mu\nu} = 2\tensor{\xi}{_\mu^\kappa} g_{\kappa\nu}$. Again, to first order, it follows that
\begin{equation}\label{eq:volume-variation}
\mu_{g'} = |\det((\tensor{\delta}{_\mu^\kappa} + 2\tensor\xi{_\mu^\kappa})g_{\kappa\nu})|^{-\frac12}[dx\,d\theta\,d\tilde\theta] = \left(1 -\frac12 \trg(\delta g)\right)\mu_g + \mathcal O (\xi^2),
\end{equation}
where we used multiplicativity of the determinant as well as
\begin{equation}
\det(1+\xi^T) = \det(1+\xi) = \tr(\xi) + \mathcal O(\xi^2)
\end{equation}
together with equation~\eqref{eq:square-operator}. Here, $\tr(\xi)=\sum_\mu \tensor\xi{_\mu^\mu}$ may be interpreted as the trace of the coefficients, or, equivalently, as a divergence with respect to the measure $[d\theta^1\dots d\theta^n]$ induced by the choice of local coordinates.
\paragraph{The $\star$ Operator.}
In~\cite{chatzistavrakidis2020graded}, the authors introduced a generalized $\star$ operator for functions on $\mathcal M$, where the underlying ordinary manifold $M$ was Minkowski space. We now extend this definition to the curved case and to functions of arbitrary bidegree by declaring $\star:\mathcal O_\mathcal M^{p,q} \to \mathcal O_\mathcal M^{n-p,n-q}$  as
\begin{equation}
\star a = \begin{cases}
    \frac{1}{(n-p-q)!}g^{n-p-q}\tilde a, & p+q\le n,\\
    \frac{1}{(p+q-n)!}\trg^{p+q-n}\tilde a, &p+q>n.
\end{cases} 
\end{equation}
As shown in appendix~\ref{app:properties-star}, the operator $\star^2$ is has non-zero eigenvalues. It follows that $\star:\mathcal O^{p,q}_\mathcal M \to \mathcal O^{n-p,n-q}_\mathcal M$ is an isomorphism.

\subsection{Graded Einstein-Cartan Gravity}
Employing the graded geometric structures introduced above, the gravity action functional can be concisely expressed on $\mathcal M = T[1]M\oplus T[1]M$ as
\begin{equation}\label{eq:graded-gravity}
S(g,D) = \frac{1}{(n-2)!} \int_\mathcal M  R_D\, g^{n-2}\,\mu_g  + S_m(g,D,\dots),
\end{equation}
where $\mu_g = |\det(g)|^{-\frac12}[dx\,d\theta\,d\tilde\theta]$ is the Berezin measure constructed from $g$, $D$ is a Poisson $\diff$-extension. The prefactor is chosen such that the variational principle gives the same weight to $g$ and $D$ as in the standard Einstein-Cartan action with unit prefactor.\footnote{That is, $S=\int_M s^\varphi\sqrt{-g_{\mu\nu}} \,d^nx + S_m$ with $s^\varphi = \tensor{R}{_\mu_\nu_\kappa^\nu}g^{\mu\kappa}$ the scalar curvature. Recall that in our convention, $D$ acts on $\theta,\tilde \theta$, which have upper indices. Therefore, $\tensor{R}{_\mu_\nu_\kappa^\lambda}$ differs in sign from that of, e.g.,~\cite{misner1974gravitation}}
The generic matter functional $S_m(g,D,\dots)$ may depend on the Poisson structure $g$, as well as the Poisson $\diff$-extension $D$. Since the functional does not depend on auxiliary fields, it is diffeomorphism invariant by Stokes' theorem.\par
Note that for $d\ge 4$, the pure gravity functional may be rewritten using the $\star$ operator, taking the compact form
\begin{equation}
    S_\textrm{EC}(g,D) = \frac1{(n-2)!}\int_\mathcal M  R_D\, g^{n-2}\,\mu_g = \frac1{(n-2)(n-3)} \int_\mathcal M \left( R_D\star
g^2\right)\mu_g.
\end{equation}
Since the metric is a dynamical field of the gravitational theory, we shall resort to the former formulation of the action functional which makes the dependence of $g$ explicit.
To derive the field equations, we need to relate the variation of the pure gravity functional $S_\textrm{EC}$ to that of the matter functional using equation~\eqref{eq:graded-gravity}.
To that end, we define
\begin{equation}
    \delta S_\textrm{EC} = \int_{\mathcal M} \left(\mathcal G\,\delta g  + \mathcal H\,\delta\varphi\right)\mu_g,\quad
    \delta S_m = \int_{\mathcal M} \left( \mathcal T\,\delta g + \mathcal S\, \delta\varphi \right)\mu_g,
\end{equation}
where $\mathcal T$ and $\mathcal S$ denote stress-energy and spin of the matter, respectively. Since $g = -\widetilde{g}$, we fix the freedom in choosing $\mathcal T$ and $\mathcal G$ such that
\begin{equation}
    \widetilde{\mathcal G} = (-1)^{n+1}\mathcal G,\quad \widetilde{\mathcal T} = (-1)^{n+1}\mathcal T.
\end{equation}
\par
We now compute the variation of the pure gravity functional with respect to $g$ as follows: Any other $g'$ may be written as
\begin{equation}
g'_{\mu\nu} = (A^* g)_{\mu\nu} = A_\mu^\kappa g_{\kappa\lambda} A_\nu^\lambda
\end{equation}
for some invertible matrix $A_\mu^\kappa$ (indeed, any non-degenerate bilinear form is uniquely determined by the datum of an orthonormal basis). Since $\mu_{A^*g} = |{\det(A)}|^{-1}\mu_g$, we have
\begin{equation}
\begin{aligned}
 S_\textrm{EC}(A^*g, D) &= \frac{1}{(n-2)!}\int_{\mathcal M }(A^*g)^{n-2}\cdot \frac{1}{4}\tensor{F}{_\kappa_\lambda_\rho^\tau}(A^*g)_{\tau\sigma} \theta^\kappa\theta^\lambda\tilde\theta^\rho\tilde\theta^\sigma\cdot \frac1{|\det(A)|}\mu_g\\ 
&= \frac1{(n-2)!}\int_\mathcal M \left(g_{\mu\nu} (A\theta)^\mu (\widetilde{A\theta})^\nu \right)^{n-2} \cdot \frac14 R_{\kappa\lambda\rho\sigma}\theta^\kappa \theta^\lambda (\widetilde{A\theta})^\rho (\widetilde{A\theta})^\sigma \cdot \frac{|\det(g)|^{-\frac12}}{|\det(A)|} [dx\,d\theta\,d\tilde\theta]\\
&= \frac1{4(n-2)!}\int_\mathcal M \left(g_{\mu\nu} (A\theta)^\mu (\widetilde{A\theta})^\nu \right)^{n-2}R_{\kappa\lambda\rho\sigma}\theta^\kappa \theta^\lambda (\widetilde{A\theta})^\rho (\widetilde{A\theta})^\sigma |\det(g)|^{-\frac12} [dx\,d\theta\,d(\widetilde{A\theta})]\label{eq:action-finite-variation}\\
&= \frac1{4(n-2)!}\int_\mathcal M \left(g_{\mu\nu} (A\theta)^\mu \tilde\theta^\nu \right)^{n-2}R_{\kappa\lambda\rho\sigma}\theta^\kappa \theta^\lambda \tilde\theta^\rho \tilde\theta^\sigma |\det(g)|^{-\frac12} [dx\,d\theta\,d\tilde\theta],
\end{aligned}
\end{equation}
where we performed the coordinate transformation $\tilde\theta \to A^{-1}\tilde\theta$ in the final step. Note that in the last formula, the variation of $R$ and $\mu_g$, together with half of the variation of $g_{\mu\nu}\theta^\mu\tilde\theta^\nu$ with respect to $g$, have been absorbed into the coordinate transformation. Thus, upon taking the adapted variation of $g$, $\mu_g$ may be regarded as a background field, while $R$ may be regarded as independent of $g$.
We now let $A(t)$ be a one-parameter family of invertible matrices with $A_\mu^\kappa(0) = \delta_\mu^\kappa$, and evaluate
\begin{equation}
\ddt S_\textrm{EC}(A(t)^*g, D) = \ddt\frac{1}{(n-2)!} \int_\mathcal M (A(t)^*g)^{n-2}R_{A(t)^* g} \mu_{A^*(t)g}.
\end{equation}
Without loss of generality, we may assume that $A_\mu^\kappa(t) = \delta_\mu^\kappa + t\xi_\mu^\kappa + \mathcal O(t^2)$, where $\xi_\mu^\kappa$ is $g$-symmetric (indeed, the antisymmetric contribution of $\xi$ in $(\delta + t\xi)^*g$ vanishes to first order in $t$).
It follows that
\begin{equation}
\ddt\frac{1}{(n-2)!} \int_\mathcal M (A(t)^*g)^{n-2}R_{A(t)^* g} \mu_{A^*(t)g} = \frac1{(n-3)!}\int_\mathcal M  \delta g\, g^{n-3} \,R_D\, \mu_g,
\end{equation}
where $\delta g \coloneqq \frac12(\xi^\kappa_\mu g_{\kappa\nu} + \xi^\kappa_\nu g_{\mu\kappa})\theta^\mu \tilde\theta^\nu = \xi^\kappa_\mu g_{\kappa\nu}\theta^\mu\tilde\theta^\nu$. The resulting equations of motion are given by
\begin{equation}
\frac{1}{2(n-3)!}g^{n-3}(R+\tilde R)  =  \mathcal T,
\end{equation}
where symmetrization in $\theta,\tilde\theta$ is necessary since $\delta g$ is graded symmetric under $\theta\leftrightarrow \tilde\theta$. Relating this to the classical Lorentzian geometry setup, this is equation tells us that the Einstein tensor must equate to the stress-energy tensor.\par
As established above, after the coordinate transformation, $D$ may be regarded as completely independent of $g$. Performing the variation of $D$, i.e., $D\to D' = D+\{\delta\varphi,-\}$, we find $R_D\to R_D+D(\delta\varphi)$, such that
\begin{equation}
\ddt S_\textrm{EC}(g,D + t\{\delta\varphi,-\}) = \frac{1}{(n-2)!}\int_\mathcal M  g^{n-2}\,D(\delta\varphi)\,\mu_g = \frac{1}{(n-3)!}\int_\mathcal M  \, Dg\, g^{n-3}\, \delta\varphi\,\mu_g,
\end{equation}
provided that
\begin{equation}
\frac1{(n-2)!} \,g^{n-2} \,\delta\varphi= 0\quad\textrm{ on } \mathcal M^\partial.
\end{equation}
Here, $Dg = T$ corresponds to the torsion of $D$ in the classical differential geometric picture, as discussed in section~\eqref{sec:geometric-structures}. The resulting equation of motion for $D$ reads
\begin{equation}
\frac1{(n-3)!} g^{n-3} Dg = \mathcal S.
\end{equation}
Relating this back to classical Lorentzian geometry, we thus find that $D$ must correspond to the Levi-Civita connection.

\section{Graded Palatini-Cartan Gravity}\label{sec:vielbein-gravity}
We now reformulate the gravitational theory discussed above to recover an Einstein-Cartan type theory of gravity. As a starting point, we take the graded gravity from the preceding chapter, and use the dynamical field $g$ to construct an isomorphism of graded manifolds $\Phi:T[1]M\oplus T^*[1]M \xrightarrow\sim T[1]M \oplus T[1]M$. We first translate the geometric structures from $\mathcal M$ to their counterparts on $\mathcal N$, and then show how to formulate the variational principle concisely on $\mathcal N$. Finally, we construct the appropriate BV extension of the theory.
\subsection{\texorpdfstring{The Graded Manifold ${T[1]M\oplus T^*[1]M}$}{The Graded Manifold T[1]M+T*[1]M}}
In the following, we are working on the graded manifold 
\begin{equation}
    \mathcal N = T[1]M \oplus T^*[1] M,
\end{equation}
equipped with local coordinates $(x^\mu, \theta^\nu = \diff x^\nu, \chi_\kappa = \partial_\kappa)$, where the latter is understood fiber coordinates on $T^*[1]M$. Note that $\mathcal N$ comes with a canonical function \begin{equation}
    \vartheta = \theta^\mu\chi_\mu,
\end{equation}
as well as an associated second order differential operator
\begin{equation}
    \Delta_\vartheta = \pdiff{}{\theta^\mu}\pdiff{}{\chi_\mu}
\end{equation}
of degree $-2$.
To relate $\mathcal N$ to $\mathcal M$, we take the field $g$ and construct an isomorphism $\Phi:\mathcal N\xrightarrow\sim \mathcal M$,
given on local coordinates by
\begin{equation}
    \Phi^*(x^\mu) = x^\mu,\quad \Phi^*(\theta^\nu) = \theta^\nu, \quad
    \Phi^*(\tilde\theta^\lambda) = g^{\lambda\kappa} \chi_\kappa.
\end{equation}
\subsection{Geometric Structures}
The graded geometric structures on $\mathcal M$ can now be transferred to $\mathcal N$ via $\Phi$. We denote the geometric structures on $\mathcal N$ by the same symbols as their counterparts on $\mathcal M$; since we work on $\mathcal N$ in this section, this should not lead to confusion.
\paragraph{The Transposition Operation.} The transposition operation may be defined on
$\mathcal N$ via
\begin{equation}
    \widetilde {\theta^\mu} = g^{\mu\nu} \chi_\nu,\quad \widetilde{\chi_\mu} = g_{\mu\nu}\theta^\nu.
\end{equation}
In particular, a function $a\in\mathcal O^{p,p}_\mathcal N$ corresponds to a $g$-symmetric tensor if and only if
\begin{equation}
    \tilde a = (-1)^p a.
\end{equation}
Note that unlike before, the transposition depends on $g_{\mu\nu}$. For the purposes of the gravitational theory that we have in mind, we shall not require the transposition operation.
\paragraph{The $\star$ Operator.}
The $\star$ operator may be defined on $a\in\mathcal O^{p,q}_\mathcal N$ by
\begin{equation}
    \star a = \begin{cases}\frac{1}{(n-p-q)!} \vartheta^{n-p-q} \widetilde a,&n\ge p+q,\\
    \frac{1}{(p+q-n)!} (\Delta_\vartheta)^{p+q-n} \tilde a , &p+q> n.
    \end{cases}
\end{equation}
Note that $\star$ again depends on $g$, this time via the transposition operation. In particular, when $\star$ is applied to some transposition invariant function, it does not depend on $g$. As before, the $\star$ operator on $\mathcal N$ is an isomorphism $\mathcal O^{p,q}_\mathcal N \xrightarrow\sim \mathcal O^{n-p,n-q}_\mathcal N$.
\paragraph{The Poisson Structure.} We now encode the Poisson structure on the $\chi_\mu$ coordinates by
\begin{equation}
    \{\chi_\mu,\chi_\nu\} = g_{\mu\nu}.
\end{equation} 
As before, infinitesimal Lorentz transformations are encoded in this bracket by
\begin{equation}\label{eq:lorentz-vielbein}
    \rho(\lambda)(v) = \left\{\frac12\lambda^{\mu\nu}\chi_\mu\chi_\nu, v^\kappa\chi_\kappa\right\} = \lambda^{\mu\nu} g_{\nu\kappa} v^\kappa \chi_\mu = \lambda\cdot v,
\end{equation}
and extended to $\mathcal O_\mathcal N$ by the graded Leibniz rule. The bracket also gives rise to a Clifford algebra representation by setting
\begin{equation}
    c(\chi_\mu) = \chi_\mu + \{\chi_\mu,-\}.
\end{equation}
A straight-forward computation shows that
\begin{equation}
    [c(\chi_\mu), c(\chi_\nu)] = 2g_{\mu\nu}\cdot 1.
\end{equation}
\paragraph{Poisson $\diff$-extensions.} As in the metric case, we are interested in Poisson $\diff$-extensions, i.e.,
\begin{equation}
    D\{a,b\} = \{Da,b\} + (-1)^{|a|}\{a,Db\}.
\end{equation}
As in eq.~\eqref{eq:poisson-metricity}, we find that this is the case if 
\begin{equation}
    D = \theta^\mu \partial_\mu - \theta^\mu\tensor{\varphi}{_\mu_\nu^\lambda} \chi_\lambda\pdiff{}{\chi_\nu},
\end{equation}
where $\tensor{\varphi}{_\mu_\nu^\lambda}$ are the same coefficients as in paragraph~\ref{par:poisson-d-ext}. We also find
\begin{equation}
    D \vartheta = \frac12\tensor{T}{_\mu_\nu^\lambda} \theta^\mu\theta^\nu \chi_\lambda =: T,
\end{equation}
where $T$ corresponds to the torsion of $\varphi$. Finally, we note that
\begin{equation}
    D^2 a = \{R_D,a\},
\end{equation}
where $R_D = -(\diff\varphi + \frac12\{\varphi,\varphi\})$, $\tensor{R}{_\mu_\nu^\kappa^\lambda}$ are \emph{minus} the components of the Riemann tensor in the conventions of~\cite{misner1974gravitation}.
\paragraph{The Berezin Measure and Stokes' Theorem.}
The Berezin measure on $\mathcal N$ becomes
\begin{equation}
    \Phi^*\left(|\det(g_{\mu\nu})|^{-\frac12} [dx\,d\theta\,d\tilde\theta]\right) = |\det(g_{\mu\nu})|^{\frac12} [dx\, d\theta\,d\chi] = \nu_g.
\end{equation}
Note that on $\mathcal N$, both $[dx\, d\theta]$ and $[d\theta\, d\chi]$ are invariant: $[dx\,d\theta]$ corresponds to the invariant measure on $T[1]M$, whereas $[d\theta \, d\chi]$ corresponds to duality pairing of the fiber directions of $\mathcal N$. We also have a version of Stokes' theorem on $\mathcal N$, which reads as follows:
\begin{theorem}
Let $\nu_g$ be the Berezin measure constructed above, let $a\in\mathcal O_\mathcal N$, and denote by
\begin{equation}
\mathcal N^\partial = T[1]\partial M \oplus T^*[1]M|_{\partial M}
\end{equation}
with induced Berezin measure
\begin{equation}
\nu^\partial_g = |\det(g)|^{\frac12}[dy\, d\psi\, d\chi],
\end{equation}
where $(y^\mu,\psi^\nu = \diff y^\nu,\chi_\lambda)$ are coordinates on $T[1]\partial M$, and $\chi_\lambda$ are the restricted degree 1 coordinates on $T^*[1]M|_{\partial M}$. It holds
\begin{equation}\label{eq:stokes-N}
\int_\mathcal N (Da)\nu_g = \int_{\mathcal N^\partial} a\,\nu_g^\partial.
\end{equation}
\end{theorem}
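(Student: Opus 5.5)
The plan is to reduce the statement to Theorem~\ref{thm:stokes-M} by transport along the diffeomorphism $\Phi:\mathcal N\to\mathcal M$, rather than recompute the Lie derivative of $\nu_g$ from scratch. By construction, the structures on $\mathcal N$ are the pullbacks of those on $\mathcal M$. Concretely, $\nu_g = \Phi^*\mu_g$, and the Poisson $\diff$-extension on $\mathcal N$ is $\Phi$-related to the one on $\mathcal M$, i.e.\ $D_\mathcal N\circ\Phi^* = \Phi^*\circ D_\mathcal M$.

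First I will verify this intertwining relation on generators. For $x^\mu$ and $\theta^\nu$ it is immediate. For $\tilde\theta^\lambda$ one has
\begin{equation*}
D_\mathcal N(g^{\lambda\kappa}\chi_\kappa) = \theta^\mu\partial_\mu g^{\lambda\kappa}\chi_\kappa - g^{\lambda\kappa}\theta^\mu\tensor{\varphi}{_\mu_\kappa^\rho}\chi_\rho .
\end{equation*}
Using the metricity relation \eqref{eq:poisson-metricity}, the right-hand side equals $\theta^\mu\tensor{\varphi}{_\mu_\nu^\lambda} g^{\nu\rho}\chi_\rho = \Phi^*(\theta^\mu\tensor{\varphi}{_\mu_\nu^\lambda}\tilde\theta^\nu)$, as required. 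Both sides are derivations, so the identity extends to all of $\mathcal O_\mathcal M$.

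Next I apply the change of variables formula for Berezin integrals. For $a\in\mathcal O_\mathcal N$, write $a=\Phi^*b$ with $b=(\Phi^{-1})^*a$; this is possible because $\Phi$ is an isomorphism. Then
\begin{equation*}
\int_\mathcal N (Da)\nu_g = \int_\mathcal N \Phi^*\big((Db)\mu_g\big) = \int_\mathcal M (Db)\mu_g ,
\end{equation*}
where the second equality uses invariance of the integral under diffeomorphisms. The map $\Phi$ is fiberwise linear over the identity of $M$, so it preserves orientation. By Theorem~\ref{thm:stokes-M}, the last integral equals $\int_{\mathcal M^\partial} b\,\mu_g^\partial$.

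Finally, I restrict $\Phi$ to the boundary. This gives a diffeomorphism $\Phi^\partial:\mathcal N^\partial\to\mathcal M^\partial$ defined by the same formulas, with $\psi$ in place of $\theta$ and $\tilde\theta\mapsto g^{-1}\chi$. The Berezinian of the fiber map $\chi\mapsto g^{-1}\chi$ is $\det(g)$, since odd variables transform inversely. This yields $(\Phi^\partial)^*\mu^\partial_g = |\det g|^{\frac12}[dy\,d\psi\,d\chi] = \nu_g^\partial$, and therefore $\int_{\mathcal M^\partial}b\,\mu_g^\partial = \int_{\mathcal N^\partial} a\,\nu_g^\partial$.

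The main obstacle is bookkeeping rather than substance. I need to check that the Berezinian factor has the correct power, $|\det g|^{-1/2}\cdot|\det g|^{+1}$, and that $\Phi$ commutes with restriction to the boundary. The latter holds since $\Phi$ covers the identity of $M$ and does not mix $\theta$ with the fiber coordinates.

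As a fallback, I would instead mimic the proof of Theorem~\ref{thm:stokes-M} directly. That route appeals to theorem~\eqref{thm:stokes-graded-general} and checks $L_D\nu_g=0$. The density $[dx\,d\theta]$ is invariant, and $D$ acts on $\chi$ by $-\varphi^T$, which contributes $+\theta^\mu\tr\varphi_\mu$ for the dual measure $[d\chi]$. This is cancelled by $D|\det g|^{\frac12} = \tfrac12 g^{\kappa\lambda}Dg_{\kappa\lambda}|\det g|^{\frac12} = -\tr(\varphi)|\det g|^{\frac12}$, again by \eqref{eq:poisson-metricity}.
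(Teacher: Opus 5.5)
Your proposal is correct, and your fallback is exactly the paper's proof. The paper simply calls the argument ``completely analogous'' to Theorem~\ref{thm:stokes-M}. That means verifying $L_D\nu_g=0$, where the $+\theta^\mu\tr\varphi_\mu$ from $[d\chi]$ cancels the $-\theta^\mu\tr\varphi_\mu$ from $|\det g|^{1/2}$ as you compute, and then invoking Theorem~\ref{thm:stokes-graded-general}, which handles the boundary term automatically.

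Your main route instead transports the whole statement along $\Phi$. This gives a structural explanation and makes explicit a check the paper leaves implicit: the $\mathcal N$-side $D$, acting on $\chi$ by $-\varphi^T$, really is $\Phi$-related to the $\mathcal M$-side $D$, and this is exactly where metricity \eqref{eq:poisson-metricity} enters. The cost is extra boundary bookkeeping: constructing $\Phi^\partial$, checking that restriction to $\partial M$ commutes with $\Phi^*$, and recomputing the boundary Berezinian. It also makes the result depend on Theorem~\ref{thm:stokes-M}.

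You can have both advantages. Your intertwining identity together with $\nu_g=\Phi^*\mu_g$ already gives $L_D\nu_g=\Phi^*(L_D\mu_g)=0$ by naturality of the Lie derivative. Theorem~\ref{thm:stokes-graded-general} then applies directly on $\mathcal N$, and $\Phi^\partial$ is never needed.

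One small correction to your bookkeeping: the Berezinian of $\chi\mapsto g^{-1}\chi$ is $\det g$, not $|\det g|$. So strictly $\Phi^*\mu_g=\operatorname{sgn}(\det g)\,\nu_g$, and likewise on the boundary, which is a factor $-1$ in Lorentzian signature. The paper's formula for $\nu_g$ glosses over the same sign. Since the same locally constant factor appears on both sides of the identity, your conclusion is unaffected.
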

\begin{proof}
    The proof is completely analogous to that on $\mathcal M$ (see theorem~\eqref{thm:stokes-M}).
\end{proof}
\subsection{Graded Constrained Vielbein Gravity}
We now derive the constrained vielbein formalism from the graded Einstein-Cartan formulation of gravity.\par
For brevity, we consider pure gravity. Potential matter terms may be added just as in the metric formalism. Since vielbein fields are defined locally, we shall consider open coordinate patches of the graded manifold. In such an open coordinate patch, we may fix an arbitrary background metric $\eta_{\mu\nu}\theta^\mu\tilde\theta^\nu$. Let us first take the action from equation~\eqref{eq:graded-gravity}, pull it back to $\mathcal N$ (recall that the integral is invariant under diffeomorphisms), and evaluate it on $\eta$:
\begin{equation}
    S_\textrm{EC}(D,\eta) = \frac{1}{(n-2)!}\int_\mathcal U R_D \,\vartheta^{n-2}\, \nu_{\eta},
\end{equation}
where $R_D = \diff\varphi + \frac12\{\varphi,\varphi\}$, and the Poisson bracket is defined with respect to $\eta$.
In the above expression, $R_D$ and $\nu_{\eta}$ depend on $\eta$.
As we work in a local patch, we may choose the components $\eta_{\mu\nu}$ to be Minkowski, i.e., $\eta = \operatorname{diag}(-1,1\dots,1)$.

We now show how the action functional behaves under a change of metric. To that end, let $\tensor{e}{_\mu^\nu}$ be a $\eta$-symmetric invertible matrix, and write
\begin{equation}
    e=\tensor{e}{_\mu^\nu} \theta^\mu \chi_\nu \in\mathcal O_\mathcal N^{1,1}.
\end{equation} 
The condition that $e$ be $\eta$-symmetric is equivalent to $\{e,\vartheta\}=0$, or, equivalently, $\widetilde{e} = -e$. Now, any other metric $g$ may be written as
\begin{equation}
    g_{\mu\nu} = (e^* \eta)_{\mu\nu} = \tensor{e}{_\mu^\kappa}\eta_{\kappa\lambda} \tensor{e}{_\nu^\lambda}.
\end{equation}
By strictly considering non-degenerate $\eta$-symmetric $\tensor{e}{_\mu^\nu}$, we ensure that the datum of $\tensor{e}{_\mu^\nu}$ is equivalent to a finite deformation of the metric. We note here that $\tensor{e}{_\mu^\nu}$ behaves exactly like a vielbein field. The only difference is that classically, the upper index of $\tensor{e}{_\mu^\nu}$ is thought of as a Lorentz index rather than a spacetime index. However, since we contract this index with the locally chosen background metric $\eta_{\mu\nu}$, which we chose to be Minkowski, the distinction is merely notational.
A computation analogous to~\eqref{eq:action-finite-variation}
reveals that
\begin{align}
    S_\textrm{EC}(D,e^*\eta) &= \frac1{4(n-2)!}\int_\mathcal U \left(\tensor{e}{_\mu^\nu} \theta^\mu \chi_\nu \right)^{n-2}\tensor{R}{_\kappa_\lambda^\rho^\sigma}\theta^\kappa \theta^\lambda \chi_\rho \chi_\sigma  [dx\,d\theta\,d\chi]\\
    &= \frac{1}{(n-2)!} \int_\mathcal U e^{n-2} R_D,\label{eq:graded-PC}
\end{align}
where we used that $|\det(\eta)| = 1$, and chose to drop the flat Berezin measure $[dx\,d\theta\,d\chi]$ from notation. Thus, the resulting functional only depends on the choice of metric through the explicit appearance of $e = \tensor{e}{_\mu^\nu} \theta^\mu \chi_\nu$. This gives rise to the following interpretation: We consistently fix $\eta$ to be Minkowski in all coordinate patches, and view the functional as a functional of $e$. This prescription leads to the \emph{constrained vielbein action}
\begin{equation}
    S_\textrm{cv}(D,e) = \frac{1}{(n-2)!}\int e^{n-2} R_D.
\end{equation}
We add a generic matter functional $S_m(D,e,\dots)$, such that the full action is
\begin{equation}
    S(D,e,\dots) = S_\textrm{cv}(D,e) + S_m(D,e,\dots),
\end{equation}
and define
\begin{equation}
    \delta S_m = -\int \left(\mathcal T\,\delta e  + \mathcal S\,\delta\omega +\dots\right),
\end{equation}
where $\mathcal T$ is the vielbein stress-energy of $S_m$, $\mathcal S$ is the spin of $S_m$, and the remaining terms are due to the variation with respect to the matter fields. Without loss of generality, we take $\mathcal S = -\tilde{\mathcal S}$; the skew-symmetric contribution of $\tilde{\mathcal S}$ is not seen by the variation with respect to the symmetric $e$.\par
The variational principle for $S_\textrm{cv}$ is particularly simple as the action functional is polynomial in $e$. More specifically, we can use the graded Stokes' theorem to find
\begin{equation}
    \delta S_\textrm{cv} = \frac{1}{(n-3)!}\int \left( e^{n-3}\, R_D\, \delta e  + e^{n-3} De \,\delta\omega \right),
\end{equation}
assuming that a boundary condition for $\omega$ is provided. It follows that
\begin{equation}
    \frac{1}{2(n-3)!} e^{n-3} \left(R_D+\widetilde{R_D}\right)= \mathcal T,\quad \frac{1}{(n-3)!} e^{n-3} De = \mathcal S,
\end{equation}
where we used that only the symmetric part $\frac12(R_{D} + \widetilde{R_{D}})$ contributes as $e=-\tilde e$.
The field equations are equivalent to the vanishing of the Einstein tensor and the torsion of $D$. To relate this theory back to the metric formulation, recall that we view $g$ as a deformation of $\eta$ by $e$, that is,
\begin{equation}
    g_{\mu\nu}=(e^*\eta)_{\mu\nu} = \tensor{e}{_\mu^\kappa} \eta_{\kappa\lambda}\tensor{e}{_\nu^\lambda},
\end{equation}
and the corresponding Poisson vector field for $g$ may be recovered via the coefficients
\begin{equation}
    \tensor\varphi{_\mu_\nu^\kappa}=\tensor{e}{_\nu^\rho}\tensor{\omega}{_\mu_\rho^\sigma} \tensor{(e^{-1})}{_\sigma^\kappa} + \tensor{e}{_\nu^\rho}\partial_\mu \tensor{(e^{-1})}{_\rho^\kappa}
\end{equation}
in accordance with equation~\eqref{eq:d-extension-transformation}.
We note here that $\eta$ may be taken as an arbitrary background metric (rather than flat Minkowski in the local model), in which case equation~\eqref{eq:graded-PC} attains an additional factor of $|\det(\eta_{\mu\nu})|^{\frac12}$.

\subsection{Graded Palatini-Cartan Gravity}
We now discuss the graded PC theory of gravity. To that end, we simply take the constrained vielbein formulation and drop the symmetry constraint $\{e,\vartheta\}=0$. The resulting action reads
\begin{align}
    S_\textrm{PC}(e,D) &= \frac1{(n-2)!}\int_\mathcal U e^{n-2}\,R_D,\\ S(e,D,\dots)&=S_\textrm{PC}(e,D) + S_m(e,D,\dots).
\end{align}
To compute the field equations, we again impose
\begin{equation}
    \delta S_m = -\int_\mathcal U\left(\mathcal T \,\delta e + \mathcal S\,\delta\omega + \dots\right),
\end{equation}
where $\mathcal T$ and $\mathcal S$ are respectively stress-energy and spin of $S_m$, and the remaining terms are due to the variation with respect to the matter fields.
Since $e$ is unconstrained, we do not need to fix the antisymmetric contribution of $\mathcal T$. Varying the Palatini-Cartan action, we find
\begin{equation}
    \delta S_\mathrm{PC} = \frac{1}{(n-3)!}\int_\mathcal U\left( e^{n-3}\,R_D\,\delta e + e^{n-3}\, De\,\delta\omega\right),
\end{equation}
such that the field equations read
\begin{equation}
    \frac{1}{(n-3)!}e^{n-3}\, R_D = \mathcal T,\quad \frac{1}{(n-3)!}e^{n-3}\,De = \mathcal S,
\end{equation}
assuming that suitable boundary conditions are put on $\omega$. 
We note again that there is no ambiguity in $\mathcal T$ as $e$ is not constrained to be symmetric. In turn, $S_\textrm{PC}$ attains an additional gauge symmetry due to Lorentz transformations of the $\chi$ coordinates. In this way, our earlier constraint $e=-\tilde e$ may be interpreted as a gauge fixing of the vielbein field.

\subsection{The Minimal BV Action}
We now construct a minimal BV extension of the graded constrained vielbein theory. Throughout, we assume, for simplicity that $R_D$ is compactly supported (or decays sufficiently fast at infinity) and that $M$ has no boundary. We leave the boundary discussion for future work. Alternatively, one may work with fields satisfying Minkowski geometry $e=\vartheta = \theta^\mu\chi_\mu$ and $D=\diff = \theta^\mu\partial_\mu$ outside a compact submanifold $\mathcal C \subset \mathcal M$.

Note that since all constructions were carried out on $T^*[1]M\oplus T[1]M$ which is canonically associated with $M$, the action functional is invariant under $\operatorname{Diff}(M)$, which is infinitesimally generated by $\mathfrak{X}(M)$ via vector fields, acting by Lie derivatives on $\mathcal O_\mathcal N$. Since we fixed $e$ to be $\eta$-symmetric, there is no residual Lorentz symmetry that needs to be dealt with, unlike in ~\cite{canepa2021general,grigoriev2021presymplectic}. 

For transparent bookkeeping, we label the \emph{ghost degree} of a field $\psi$ by $\gh(\psi)$, and write $A[j_\gh]$ whenever we shift the ghost degree of some object $A$ by $j$. In particular, by $\mathcal O_\mathcal N[j_\gh]$, we mean graded functions of the form
\begin{equation}
    a = \sum_{M,N} \tensor{a}{_M^N} \theta^M \chi_N,
\end{equation}
where the coefficients $\tensor{a}{_M^N}$ are \emph{declared} to have ghost degree $j$.
Note that the ghost degree is captured by the coefficients $\tensor{a}{_M^N}$ rather than the generators $\theta,\chi$.

Let us indicate vector fields on $M$, viewed as differential operators, with a hat ($\hat{-}$) symbol.
The full degree $+1_\gh$ BV differential is induced by some vector field $\hat c = c^\mu\partial_\mu\in\mathfrak X (M)[1_\gh]$, and reads
\begin{equation}
    Q\coloneqq L_{\hat c} \in\mathfrak X(\mathcal O_\mathcal N)[1_\gh]
\end{equation}
on the physical fields. Here, $L$ denotes the Lie derivative, $\gh(c^\mu) = 1$, and $c^\mu$ is of degree $0$ in $\theta$ and $\chi$.
Now, the BV differential $Q$ acts as
\begin{equation}
    Q(x^\mu) = c^\mu,\quad Q(\theta^\mu) = \theta^\nu \partial_\nu c^\mu,\quad Q(\chi_\nu) = -(\partial_\nu c^\mu) \chi_\mu
\end{equation}
on the generators of $\mathcal O_\mathcal N$.
With this, the action of $Q$ on the physical field $e$ may be computed as follows:
\begin{align}
    Q(e) &= c^\mu (\partial_\mu\tensor{e}{_\nu^\kappa})\theta^\nu\chi_\kappa +\tensor{e}{_\nu^\kappa}\theta^\nu (\partial_\kappa c^\mu) \chi_\mu + \tensor{e}{_\nu^\kappa} \theta^\mu(\partial_\mu c^\nu)\chi_\kappa,
\end{align}
and for the coefficients $\tensor{\omega}{_\mu_\kappa^\lambda}$ of the Poisson derivation, we find
\begin{equation}
      g^{\kappa\sigma}Q(\omega)\tensor{}{_\mu_\sigma^\lambda}  = c^\rho \tensor{R}{_\rho_\mu^\kappa^\lambda} + g^{\kappa\sigma}\nabla_\mu\nabla_\sigma c^\lambda.
\end{equation}
Since we only need the antisymmetric component in $\kappa,\lambda$ of the above expression in the BV action, we may write
\begin{equation}
    Q(\omega) = \frac12 \left(c^\rho \tensor{R}{_\rho_\mu^\kappa^\lambda} + g^{\kappa\sigma}\nabla_\mu\nabla_\sigma c^\lambda\right) \theta^\mu \chi_\kappa\chi_\lambda \in\mathcal O^{1,2}_\mathcal N [1_\gh].
\end{equation}
We now compute $\widehat{Q(c)}$ for the non-physical ghost field $\hat c$.
To that end, let $\phi$ be some physical field of the theory. Imposing $Q^2 = 0$, together with the Leibniz rule, we find
\begin{equation}
    0 = Q(Q(\phi)) = Q(L_{\hat c}\phi) = L_{\widehat{Q( c)}}\phi -\frac12L_{[\hat c,\hat c]}\phi,
\end{equation}
where we used that $L_{\hat c}^2 = \frac{1}{2} L_{[\hat c,\hat c]}$ as $c$ is odd. Thus, we find $\widehat{Q( c)} = -\frac12[\hat c,\hat c]$, or, in components,
\begin{equation}
    \widehat{Q(c)} = -c^\mu(\partial_\mu c^\nu)\partial_\nu \in\mathfrak X(M)[2_\gh].
\end{equation}
We may represent $\hat c$ and $\widehat{Q(c)}$ as functions $c$ and $Q(c)$ on $\mathcal N$ via
\begin{equation}
    c = c^\mu \chi_\mu\in \mathcal O^{0,1}_\mathcal N[1_\gh],
    \quad
    Q(c) = -c^\mu(\partial_\mu c^\nu)\chi_\nu\in\mathcal O_\mathcal N[2_\gh],
\end{equation}
allowing for a particularly concise graded geometric setup of the BV theory.
With this, the space of fields $\mathcal F$ of the BRST extension of the constrained graded vielbein theory is affine over $ \operatorname{Sym}(\mathcal O_\mathcal N^{1,1})\times\mathcal O^{1,2}_\mathcal N \times \mathcal O_\mathcal N^{0,1}[1_\gh]$,\footnote{Recall that while $\omega$ is not a function on $\mathcal N$, the difference $\delta \omega =\omega-\omega'$ is a Hamiltonian vector field $\{h,-\}$ for some $h\in\mathcal O^{1,2}_\mathcal N$.} where $\operatorname{Sym}(\mathcal O_\mathcal N^{1,1}) = \{e\in\mathcal O_\mathcal N^{1,1}:\tilde e = -e\}$. Moreover, $Q(\omega)\in\mathcal O^{1,2}_\mathcal N [1_\gh]$ and $R_D\in\mathcal O_\mathcal N^{2,2}$ are functions on $\mathcal N$.

In conclusion, all fields occurring in the action functional are functions on $\mathcal N$ (potentially with additional ghost degree). In the usual BV spirit, the antifields of $e,\omega$ and $c$ are elements of the BV extended field space which is the shifted cotangent bundle $T^*[-1_\gh]\mathcal F$. 
In our graded formalism, we have a particularly simple characterisation of the BV extended field space: Note that we have a canonical pairing
\begin{equation}
    \mathcal O^{p,q}_\mathcal N \times \mathcal O^{n-p,n-q}_\mathcal N \xrightarrow{\int_\mathcal N \nu} \mathbb R,
\end{equation}
with $\nu$ the integral density associated with the background metric. This identifies
\begin{equation}
    T^*[-1_\gh]\mathcal O^{p,q}_\mathcal N \cong \mathcal O^{p,q}_\mathcal N\times\left(\mathcal O^{p,q}_\mathcal N \right)^*[-1_\gh] \cong \mathcal O^{p,q}_\mathcal N\times\mathcal O^{n-p,n-q}_\mathcal N[-1_\gh],
\end{equation}
such that $T^*[-1_\gh]\mathcal O_\mathcal N \cong \mathcal O_\mathcal N \times\mathcal O_\mathcal N [-1_\gh]$. Here, the cotangent bundle is understood as the $L^2$ dual of the tangent bundle. We conclude that we do not need further graded variables to account for the antifields.

We now introduce the corresponding antifields $e^\dagger,\omega^\dagger$ of ghost degrees $-1_\gh$ and $\hat c^\dagger$ of ghost degree $-2_\gh$:
\begin{equation}
    e^\dagger \in \mathcal O^{n-1,n-1}_\mathcal N[-1_\gh],\quad  \omega^\dagger \in \mathcal O^{n-1,n-2}_\mathcal N[-1_\gh], \quad c^\dagger \in \mathcal O_\mathcal N^{0,1}[-2_\gh].
\end{equation}
As pointed out above, the antifields are again functions on $\mathcal N$, endowed with appropriate ghost degrees.

Let us again work locally as in the preceding section, and choose the background metric to be Minkowski with respect to the chosen coordinates. We suppress the flat measure $[dx\,d\theta\,d\chi]$ from notation.
The symplectic form on the space of fields is given by
\begin{equation}
    \varpi = \int_\mathcal N \left(\delta e^\dagger\wedge\delta e + \delta\omega^\dagger \wedge\delta\omega + \delta c^\dagger \wedge\delta c\right),
\end{equation}
where $\wedge$ denotes the product of functions on $T[1]\mathcal O_\mathcal N$, and $\delta$ is the homological vector field on $T[1]\mathcal O_\mathcal N$, or, equivalently, the de Rham differential on the space of fields.
The full BV action reads
\begin{equation}
    S_\mathrm{BV} = \int_\mathcal N \left( \frac{1}{(n-2)!}e^{n-2} \, R_D + e^\dagger Q(e) + \omega^\dagger Q(\omega) + c^\dagger Q(c)\right).
\end{equation}
The symplectic form on field space induces the odd Poisson structure
\begin{equation}
    (A,B) = \frac12\int_\mathcal N \sum_{\phi\in \{e,\omega,c\}}
    \left( \frac{\delta A}{\delta \phi^\dagger}\frac{\delta B}{\delta\phi} + (-1)^{|A||B|}\frac{\delta B}{\delta \phi^\dagger}\frac{\delta A}{\delta\phi}\right)
\end{equation}
of ghost degree $+1_\gh$,
and by plugging in the BV action, we find
\begin{equation}\label{eq:cme}
    (S_\textrm{BV},S_\textrm{BV}) = \frac{1}{(n-3)!}\int_\mathcal N\left(Q(e) e^{n-3} R_D + Q(\varphi)e^{n-3}De \right).
\end{equation}
All other terms vanish by Stokes' theorem $\int Q(-) = 0$, together with $Q(A)Q(B) =Q(A Q(B))$ as $Q^2 = 0$.
Equation~\ref{eq:cme} vanishes as it corresponds precisely to the infinitesimal gauge transformation by the diffeomorphism generated by $c^\mu \partial_\mu$. More concretely, an explicit calculation shows that
\begin{equation}
    \frac{1}{(n-3)!}\int_\mathcal N\left(Q(e) e^{n-3} R_D + Q(\omega)e^{n-3}De \right)[dx\,d\theta\,d\chi] = \frac2{(n-2)!}\int_\mathcal N Q\left(e^{n-2} R_D [dx\,d\theta\,d\chi]\right),
\end{equation}
which vanishes by Stokes' theorem whenever $\mathcal M$ has no boundary and $R_D$ has compact support.\par
To obtain the minimal BV extension of the graded Palatini-Cartan theory, one needs to take care of the additional Lorentz symmetry $\chi_\mu \to \tensor{\Lambda}{_\mu^\nu} \chi_\nu$. Recall from equation~\eqref{eq:lorentz-vielbein} that such transformations are infinitesimally generated by Hamiltonian vector fields of the form $\{\lambda,-\}$ for $\lambda\in\mathcal O_\mathcal N^{0,2}[1_\gh]$. The incorporation of these Lorentz ghosts is known; see, e.g.,~\cite{canepa2021general,grigoriev2021presymplectic} and references therein.

\section{Conclusion and Outlook}
In this paper, we showed how to graded geometric structures introduced in~\cite{chatzistavrakidis2017tensor, chatzistavrakidis2020unified, chatzistavrakidis2020graded} to the case where the spacetime manifold is curved.
We then used these tools to cast the Einstein-Cartan and Palatini-Cartan theories of gravity in the graded geometric formalism. Using a diffeomorphism of graded manifolds, we showed that these theories are equivalent if one restricts to vielbein fields that are symmetric with respect to a background metric. As such vielbein fields always exist and reproduce the correct physical metric, this does not pose a restriction on the physical applicability of the theory. By removing the constraint on the vielbein field, we demonstrated how to obtain the Palatini-Cartan theory.

After treating the classical theory, we showed how to deal with the diffeomorphism symmetry by constructing the minimal BV-extended action.\par

As the theories are naturally expressed on the graded manifolds $T[1]M\oplus T[1]M$ and $T[1]M\oplus T^*[1]M$, a natural question is how generalized geometry fits into the picture. To that end, it would be desirable to establish a link to graded Courant algebroids~\cite{severa2020courant,chatzistavrakidis2026gravity}. Additionally, a graded BFV theory needs to be developed to deal with nonempty boundaries~\cite{canepa2021general}. It might also be useful to compare our current covariant graded geometric study to jet formulations involving the graded manifold $T[1]T[1]M$~\cite{kochan2003differential}.

Finally, it would be of great value to extend the setup to the case where $M$ is already a $\mathbb Z_2$-graded supermanifold, allowing for a direct extension to superspace formulations of supersymmetric field theories.

\acknowledgments
The authors thank Leonardo Castellani, Athanasios Chatzistavrakidis, Henry Ruben Fischer, Larisa Jonke, Maxim Grigoriev, Rishabh Jha, Tay Pang Shiang, Richard Szabo and Jan Vysok\'y for insightful conversations.
This work is funded by the Deutsche Forschungsgemeinschaft (DFG, German Research Foundation) – grant no.~545733782.


\bibliographystyle{JHEP}
\bibliography{sources.bib}
\appendix
\section{Berezin Integrals}\label{sec:app-berezin}
We now develop a covariant formulation of Berezin integrals on $\mathcal U \cong U \times\mathbb R^s[k]$ for odd $k$ and contractible $U$. We equip $\mathcal U$ with global coordinates $(x^\mu,\xi^\mu)$, where $x^\mu$ are in degree 0, and the $\xi^i$ are in degree $k$. Any function $a$ on $\mathcal U$ can then be decomposed according to
\begin{equation}
    a = \sum_I a_I \xi^I.
\end{equation}
The \emph{Berezin integral} of $F$ is defined as
\begin{equation}
    \int_\mathcal U a [d x^1\dots d x^n \,d\xi^1\dots d\xi^s] \coloneqq \int_U a_{1,\dots,1} d\lambda( x^1,\dots, x^n)
\end{equation}
with respect to the Lebesgue measure. In words, we \emph{take the coefficient of the highest degree monomial and integrate it with respect to the Lebesgue measure}. We thus have the alternative description
\begin{equation}\label{eq:berezin-pdiff}
    \int_{\mathcal U} a [d x^1\dots d x^n d\xi^1\dots d\xi^s] = \int_U \iota^*\left(\pdiff{}{\xi^1} \dots \pdiff{}{\xi^s} a \right)d\lambda(x^1,\dots,x^n),
\end{equation}
where $\iota:U\to U\times\mathbb R^s [k]$ is the inclusion map.
In practice, $\iota^*$ is implied and dropped from the notation. This translates to the ``recipe"
\begin{equation}
    \int \xi^1\dots\xi^s [\diff\xi^1 \dots \diff\xi^s] = 1,
\end{equation}
together with Lebesgue integration along the degree 0 coordinates.
If we do not have any negatively graded coordinates, any degree-preserving diffeomorphism $U\times\mathbb R^s[k] \to W\times \mathbb R^s [k]$ automatically has the form\footnote{In the case where both negative and positive grading are present, the following results still hold, but the computations generally become more complicated (essentially because the Berezinian also ``sees'' the off-diagonal blocks, and $\phi$ generally allows for maps of the type $\phi^*(x^\mu) = x^\mu + \theta^i c_{ij} \xi^j$ with $\deg(\xi) = -\deg(\theta)$.}
\begin{equation}\label{eq:diffeo}
\phi: \mathcal U \to \mathcal W,\quad \phi^*(y^i) = x^i(y), \quad\phi^*(\eta^k) = \xi^k(y,\eta).
\end{equation}
It follows that the Berezinian of the Jacobian decomposes into block-trigonal form:
\begin{equation}
    \operatorname{ber}(J\phi) = \operatorname{ber}\begin{pmatrix}
        \pdiff{x}{y} & 0\\
        \pdiff{\xi}{y} & \pdiff{\xi}{\eta}
    \end{pmatrix}
    =\det\left(\pdiff{\eta}{\xi}\right) \det\left(\pdiff{x}{y}\right).
\end{equation}
This allows us to write down the following transformation formula for $a\in\mathcal O_\mathcal W$:
\begin{align}
    \int_\mathcal U (a\circ\phi) [\diff x \,\diff\xi] &= \int_U\pdiff{}{\xi^1} \dots \pdiff{}{\xi^s} (a\circ\phi) d\lambda(x) \\
    &= \int_U \det\left(\pdiff{\eta}{\xi}\right)\left(\pdiff{}{\eta^1} \dots \pdiff{}{\eta^s} a\right)\circ \phi\, d\lambda(x) \\
    &= \int_W \det\left(\pdiff{\eta}{\xi}\right) \left|\det\left(\pdiff{x}{y}\right)\right|a \,d\lambda(y)\\
    &=\int_\mathcal W \operatorname{or}(\phi)\operatorname{ber}(D\phi) a [\diff y\,\diff\eta],
\end{align}
where $\operatorname{or}(\phi) = \operatorname{sign}\det(\pdiff{x}{y})$, and $[\diff x\,\diff\xi] = [\diff x^1\dots\diff x^n\diff\xi^1\dots\diff\xi^s]$. The integral operator thus becomes invariant under diffeomorphisms if we integrate objects that transform as
\begin{equation}
    \phi^*\beta = \operatorname{or(\phi)}\operatorname{ber}(D\phi) \beta \circ\phi.
\end{equation}
The $\mathcal O (\mathcal U)$-module corresponding to such objects is called \emph{sheaf of integral densities}, and is 
\begin{equation}
	\Gamma(\operatorname{or}(T\mathcal U)\otimes_{\mathbb Z_2}\operatorname{ber}(T\mathcal U)) \eqqcolon \mathcal I(\mathcal U).
\end{equation}
If $\mathcal M$ is oriented and only orientation preserving diffeomorphisms are considered, sections of the Berezinian sheaf $\mathcal Ber(\mathcal U) = \varphi(\operatorname{ber}(T\mathcal U))$ can be integrated directly.
\section{Stokes' Theorem}
Let us first formulate the general Stokes' theorem on graded manifolds without boundary. Recall that each vector field $X$ on a graded manifold $\mathcal M$ induces a flow $\phi^X_t$, and sections of natural vector bundles may be differentiated in the direction of $X$ in the following sense
\begin{equation}
	L_X(\mu) = \left.\frac\diff{\diff t}\right\vert_{t=0} (\phi_t^X)^*(\mu),
\end{equation}
where $t$ is an auxiliary variable of degree $-\deg(X)$. In the following, we will consider the case where $\mu\in\mathcal I(\mathcal M)$ is a Berezin measure.
\begin{theorem}[Stokes on graded manifolds] Let $\mathcal M$ be a graded manifold without boundary, let $X$ be a vector field on $\mathcal M$, and let $\mu\in\mathcal I_\mathcal M$ be a compactly supported integral density. It holds
\begin{equation}
\int_\mathcal M L_X(\mu) = 0.
\end{equation}
\end{theorem}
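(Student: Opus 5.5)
The plan is to reduce the statement to a local computation via a partition of unity, and then use the local form of Berezin integration from appendix~\ref{sec:app-berezin} to turn it into the ordinary divergence theorem on $\mathbb R^n$.

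\textbf{Step 1: localize.} Cover the support of $\mu$ by finitely many coordinate charts $\mathcal U_\alpha\cong U_\alpha\times\mathbb R^s[k]$ and choose a partition of unity $\{\rho_\alpha\}$ of degree~$0$ functions pulled back from $M$. Since $L_X$ is a derivation of the module of densities over $\mathcal O_\mathcal M$, we have
\begin{equation}
L_X(\rho_\alpha\mu)=X(\rho_\alpha)\mu+\rho_\alpha L_X(\mu),
\end{equation}
and $\sum_\alpha X(\rho_\alpha)=X(1)=0$. Hence $\int L_X(\mu)=\sum_\alpha\int L_X(\rho_\alpha\mu)$, and it suffices to treat a compactly supported density in a single chart. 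There we write $\mu=f\,[dx\,d\xi]$ and $X=X^\mu\partial_{x^\mu}+X^i\partial_{\xi^i}$.

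\textbf{Step 2: the local divergence formula.} Differentiating the transformation law $\phi^*\mu=\operatorname{or}(\phi)\operatorname{ber}(J\phi)\,\mu\circ\phi$ along the flow, and using $\frac{d}{dt}\operatorname{ber}=\operatorname{str}$, gives
\begin{equation}
L_X(f[dx\,d\xi])=\Big(\partial_\mu(X^\mu f)\pm\partial_{\xi^i}(X^i f)\Big)[dx\,d\xi],
\end{equation}
with the sign fixed by the parities of $X$ and $f$. Both terms integrate to zero:
\begin{itemize}
\item The Berezin integral extracts the top $\xi$-coefficient, and $\partial_{\xi^i}$ of anything never contains the monomial $\xi^1\cdots\xi^s$. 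The odd-direction term therefore contributes nothing.
\item The top coefficient of $\partial_\mu(X^\mu f)$ is $\partial_\mu$ of a compactly supported smooth function. Its Lebesgue integral vanishes by the ordinary divergence theorem.
\end{itemize}

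\textbf{Expected obstacle.} The main difficulty is Step 2: justifying the super-divergence formula rigorously, including the sign bookkeeping. When $X$ has nonzero degree, the flow parameter $t$ is graded, so $\phi^X_t$ is only a formal flow. I would avoid the flow altogether and take the local formula as the infinitesimal version of the Berezinian transformation law. Concretely, I would check it on an infinitesimal diffeomorphism $x\mapsto x+\epsilon X^\mu$, $\xi\mapsto\xi+\epsilon X^i$ with $\epsilon^2=0$, and use $\operatorname{ber}(1+\epsilon J)=1+\epsilon\operatorname{str}J$. Well-definedness across charts then follows from the invariance already established in appendix~\ref{sec:app-berezin}.
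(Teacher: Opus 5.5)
Your proposal is correct and is essentially the paper's own first argument made precise: the paper likewise notes that the even-direction contribution vanishes by the ordinary Stokes theorem, and that $\partial f/\partial\xi^i$ never contains the top monomial, so its Berezin integral is zero. The paper adds a one-line ``invariant'' proof that differentiates $\int_\mathcal M (\phi^X_t)^*\mu$ and invokes diffeomorphism invariance; you deliberately avoid it, which is reasonable, since for $X$ of nonzero degree the flow is only formal, and your dual-number derivation of $L_X(f[dx\,d\xi])$ as a total super-divergence is the cleaner justification.
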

\begin{proof}
Along the even directions, this follows from the ordinary Stokes' theorem; for the odd directions, one sees that expressions of the kind $\pdiff{f}{\xi^i}$ do not contain any top power in $\xi^i$ and therefore, the Berezin integral vanishes. For completeness, let us give a more invariant proof of the theorem. For $\mathcal M$ without boundary, we have
\begin{equation}
	\int_\mathcal M L_X\mu = \left.\frac\diff{\diff t}\right\vert_{t=0} \int_\mathcal M (\phi^X_t)^*\mu = 0
\end{equation}
by diffeomorphism invariance.
\end{proof}
Defining graded manifolds with boundaries is generally not straight-forward. For the purposes of this paper, it suffices however to consider special cases that we know how to deal with. In what follows, we derive two versions of Stokes' theorem which apply to the graded geometric language presented in the main text. To do so, we make use the classical Stokes' theorem for ordinary manifolds.
\begin{lemma}
	Let $F\in \mathcal O_{T[1]M}$, and denote by $\diff$ the homological vector field. It holds
	\begin{equation}
		\int_{T[1]M} \diff a = \int_{T[1]\partial M}a,
	\end{equation}
	where both integrals are computed with respect to the canonical invariant measures $[dx\, d\theta]$ on $T[1]M$ resp.~$T[1]\partial M$. 
\end{lemma}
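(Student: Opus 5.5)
The lemma is the graded incarnation of the classical Stokes theorem for differential forms, and I would prove it by reducing to that theorem through the identification of $\mathcal O_{T[1]M}$ with $\Omega^\bullet(M)$. The plan is to show three things. First, Berezin integration on $T[1]M$ against $[dx\,d\theta]$ is integration of the associated differential form. Second, $\diff$ corresponds to the exterior derivative. Third, restriction to $T[1]\partial M$ corresponds to pulling back the form along $\iota:\partial M\hookrightarrow M$. Once these are in place, the statement is $\int_M \diff\alpha=\int_{\partial M}\iota^*\alpha$.

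\textbf{Step 1: the dictionary.} Write $a=\sum_N a_N\theta^N$ in local coordinates and set $\alpha=\sum_N a_N\,\diff x^N\in\Omega^\bullet(M)$. This assignment is invariant under the lifted diffeomorphisms $(\Phi,J\Phi)$, because $\theta^\mu$ transforms exactly like $\diff x^\mu$. Hence $a\mapsto\alpha$ is a global algebra isomorphism. In coordinates, $\diff a=\theta^\mu\partial_\mu a_N\,\theta^N$ is visibly the local formula for the exterior derivative of $\alpha$.

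\textbf{Step 2: integration.} By the computation in section~\ref{sec:berezin-integrals}, on a chart $\int a\,[dx\,d\theta]=\int_U a_{1\dots n}\,d^nx$, which is $\int_U\alpha$ (only the top-degree component survives). The orientation-twisted Berezinian accounts for the sign conventions. Gluing with a partition of unity gives $\int_{T[1]M}a\,[dx\,d\theta]=\int_M\alpha$, and the same argument on $\partial M$ handles the boundary integral.

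\textbf{Step 3: boundary restriction.} This is the only point needing care. One has to say precisely what $a$ means as a function on $T[1]\partial M$. The natural choice is the pullback along the graded map $T[1]\iota:T[1]\partial M\to T[1]M$, given by $x^\mu\mapsto\iota^\mu(y)$ and $\theta^\mu\mapsto \partial_\nu\iota^\mu\,\psi^\nu$. Under the dictionary, this pullback becomes $\iota^*\alpha$. To verify it, I would choose boundary-adapted coordinates near $\partial M$, namely $x^n\ge 0$ with $\partial M=\{x^n=0\}$. There the map simply sets $x^n=0$ and $\theta^n=0$, so only the coefficient $a_{1\dots n-1}$ survives the Berezin integral over $\psi^1,\dots,\psi^{n-1}$.

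\textbf{Orientation and conclusion.} The remaining obstacle is fixing the induced orientation on $\partial M$ consistently with the order of the odd derivatives in the Berezin integral (outward normal first, and the order $\partial_{\theta^n}\cdots\partial_{\theta^1}$). I would check this directly in adapted coordinates. With a compactly supported $a$ in a half-space chart, $\int\partial_n(a_{1\dots n-1})\theta^n\theta^1\cdots\theta^{n-1}$ reduces by Fubini and the fundamental theorem of calculus to a boundary term. The terms $\partial_j$ with $j<n$ integrate to zero. The sign $(-1)^{n-1}$ from moving $\theta^n$ to the last position combines with the sign from the boundary evaluation (the outward normal is $-\partial_n$) to give exactly the standard induced orientation. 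With the conventions matched, the identity follows from the classical Stokes theorem.
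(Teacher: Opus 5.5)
Your proposal is correct and follows the same route as the paper. The paper likewise carries out the $\theta$-integration to identify the Berezin integral on $T[1]M$ with integration of the corresponding differential form, identifies $\diff$ with the exterior derivative, and then applies the classical Stokes theorem; your treatment of the restriction via $T[1]\iota$ and of the induced boundary orientation (outward normal first, which in a half-space chart differs by $(-1)^n$ from the naive choice $y^\mu = x^\mu$) makes explicit what the paper's one-line argument leaves implicit.
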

\begin{proof}
We explicitly write out the invariant measure $[dx\,d\theta]$ and carry out the $\theta^\mu$ integration first, such that
\begin{equation}
\int_{T[1]M} (\diff a)_N \theta^N [dx\, d\theta] = \int_M (\diff a)_N \diff x^N = \int_{\partial M} a_L \diff y^L = \int_{T[1]\partial M} a,
\end{equation}
where we identified the cohomological vector field $\diff$ on $T[1]M$ with the exterior derivative of differential forms on $M$. The middle equality holds due to Stokes' theorem on ordinary (non-graded) manifolds.
\end{proof}
To extend this result to $\mathcal E = T[1]M\oplus E[k]$, we need to extend the Berezin measure to $\mu = [dx\,d\theta]\mu_E$, where $\mu_E = \rho [d\xi]$. We assume further that $\rho$ is homogeneous of degree $0$. Let us denote by $\iota:\partial M\xhookrightarrow{} M$ the inclusion of the boundary, and define 
\begin{equation}
	\mathcal E^\partial\coloneqq T[1]\partial M \times_{\partial M} (\iota^*E)[k] = T[1]\partial M \oplus E|_{\partial M}[1].
\end{equation}
Note that $\mu$ induces a measure $\mu^\partial$ on $\mathcal E^\partial$ by
\begin{equation}
\mu^\partial = \rho|_{\partial M} [dy\,d\eta\,d\xi],
\end{equation}
where $(y^\mu,\chi^\nu = \diff y^\nu)$ are coordinates on $T[1]\partial M$.
\begin{theorem}\label{thm:stokes-graded-general}
Let $\mathcal E = T[1]M\oplus E[k]$, and let $\mu = \rho[dx\,d\theta\,d\xi]$ be a Berezin measure on $\mathcal E$ such that $\deg(\rho)=0$ and $L_D\mu = 0$. It holds
\begin{equation}
\int_\mathcal E (Da)\mu = \int_{\mathcal E^\partial} a \mu^\partial.
\end{equation}
\end{theorem}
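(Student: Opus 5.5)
The strategy is to integrate out the fiber directions $\xi$ first, which turns the Berezin integral over $\mathcal E$ into an integral over $T[1]M$, and then to apply the preceding lemma (Stokes on $T[1]M$). Concretely, for $a\in\mathcal O_\mathcal E$ define the fiber integral
\begin{equation}
\pi_*(a) \coloneqq \pdiff{}{\xi^r}\dots\pdiff{}{\xi^1}(\rho\, a)\in\mathcal O_{T[1]M},
\end{equation}
so that $\int_\mathcal E a\mu = \int_{T[1]M}\pi_*(a)$ by the local formula of section~\ref{sec:berezin-integrals}. The same construction on the boundary gives $\int_{\mathcal E^\partial} a\mu^\partial = \int_{T[1]\partial M}\iota^*\pi_*(a)$, because $\mu^\partial$ is built from $\rho|_{\partial M}$ and the same fiber coordinates. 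Granting this, the theorem reduces to the key identity
\begin{equation}
\pi_*(Da) = \diff\,\pi_*(a).
\end{equation}
Once we have it, the lemma gives $\int_{T[1]M}\diff\,\pi_*(a) = \int_{T[1]\partial M}\pi_*(a)$, which is the claim.

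To prove the key identity, I will work locally and use a partition of unity. The assumption $L_D\mu=0$ is in fact needed to make $\pi_*$ a global object: a globally defined $\pi_*$ is what allows the local pieces to glue. Write $D = \theta^\mu\partial_\mu + \theta^\mu\tensor\phi{_\mu_j^i}\xi^j\pdiff{}{\xi^i}$. Only the top $\xi$-component of $\rho a$ survives $\pi_*$. On that component, the term $\theta^\mu\phi\,\xi\,\partial_\xi$ acts by multiplication by the trace, so it produces $\theta^\mu\tensor\phi{_\mu_i^i}$ times the top coefficient, up to a sign fixed by moving $\theta^\mu$ past $\partial_\xi$ derivatives (the sign is $(-1)^{kr}$-type and is harmless). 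Lower $\xi$-components cannot be raised to top degree by $D$, because $D$ preserves $\xi$-degree.

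Next, expand $\diff(\rho a_{\rm top}) = \rho\,\diff a_{\rm top} + (\diff\rho)a_{\rm top}$. The condition $L_D\mu=0$ unpacks exactly as in the proof of theorem~\ref{thm:stokes-M}: invariance of $[dx\,d\theta]$ under $\diff$ leaves the condition
\begin{equation}
\diff\rho = \theta^\mu\tensor\phi{_\mu_i^i}\,\rho,
\end{equation}
possibly with a sign. This is the statement that the Berezin measure $[d\xi]$ transforms inversely to $\xi^1\cdots\xi^r$, so $L_D[d\xi]$ contributes minus the trace. Substituting this, the trace term from $D$ acting on $\xi^1\cdots\xi^r$ cancels against $\diff\rho$. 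What remains is $\pi_*(Da) = \diff\,\pi_*(a)$.

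The main obstacle is the sign bookkeeping in this cancellation. I have to check that the trace produced by $D(\xi^1\cdots\xi^r)$ and the one in $\diff\rho$ come with compatible signs, once the odd $\theta^\mu$ is commuted past the $r$ odd derivatives $\pdiff{}{\xi^i}$. Equivalently, I must compute $L_D(\rho[d\xi])$ correctly as a density. I plan to fix this by rederiving $L_D\mu$ from the transformation law $\phi^*\mu = \operatorname{or}\,\operatorname{ber}(J\phi)\,\mu\circ\phi$, applied to the flow of $D$ at first order. The block-triangular Berezinian then gives $\operatorname{ber} = 1 + (\text{div}_x - \text{tr}_\xi)$, where $\text{div}_x$ is the $x$-divergence of $\theta^\mu$, which vanishes, and $\text{tr}_\xi$ is the $\xi$-trace. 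This pins the sign, and the hypothesis $\deg\rho=0$ ensures no extra sign arises when $\rho$ is moved past odd generators. The boundary term needs no extra work: restriction commutes with $\pi_*$, because the fiber coordinates are simply restricted to $\partial M$.
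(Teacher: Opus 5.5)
Your proposal is essentially the paper's proof. Both integrate out the $\xi$ directions, use $L_D\mu=0$ (equivalently $\diff\rho=\theta^\mu\tensor{\phi}{_\mu_i^i}\rho$) so that $D$ acts as $\diff$ on the fiber integral, and finish with the Stokes lemma on $T[1]M$; you make explicit the trace cancellation the paper leaves implicit, though $L_D\mu=0$ is what gives this intertwining, not what makes $\pi_*$ global (that already follows from $\rho[d\xi]$ being a global fiber density).

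One caveat, shared with the paper's own computation: since $\pi_*$ lowers degree by $kr$, one actually gets $\pi_*(Da)=(-1)^{r}\diff\,\pi_*(a)$, a sign on the whole expression rather than just the trace term. So with the stated ordering (fiber derivatives applied first) the formula holds only up to $(-1)^r$, and holds exactly once the $\theta$-directions are integrated before the $\xi$-directions.
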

\begin{proof}
For convenience, we may choose local coordinates $(x^\mu)$ of degree $0$ in a neighbourhood of the boundary such that $x^n = 0$ on $\partial M$. In other words, we have $x^\mu = y^\mu$ for $\mu<n$. We now compute
\begin{align}
\int_\mathcal E (Da)\,\mu &= \int_{T[1]M} \pdiff{}{\xi^s}\dots\pdiff{}{\xi^1} (Da) \rho[dx\,d\theta]\\
&=\int_{T[1]M} \diff\left( \rho \pdiff{}{\xi^s}\dots\pdiff{}{\xi^1} a\right) [dx\,d\theta]\\
&=\int_{T[1]\partial M} \rho \pdiff{}{\xi^s}\dots\pdiff{}{\xi^1} (a) [dy\,d\chi]\\
&=\int_{\mathcal E^\partial} a\,\mu^\partial
\end{align}
where the second equality holds since $L_D(\mu) = 0$ and $D = \diff$ once all the $\xi$ coordinates have been integrated out.
\end{proof}
In the cases that we are interested in, $\mathcal E$ will always come with a Berezin measure $\mu = \mu_E \cdot [dx\, d\theta] = \rho [dx\,d\theta\,d\xi]$, such that the extension $D$ of $\diff$ satisfies $L_D(\mu) = 0$.

\section{\texorpdfstring{Invertibility of the $\star$ Operator}{Invertibility of the Star Operator}}\label{app:properties-star}
In this section, we prove some properties of the $\star$ operator. Without loss of generality, we will work on $\mathcal M$; the results carry over to $\mathcal N$ upon identifying $\mathcal M\cong \mathcal N$ via $\tilde\theta^\mu\mapsto g^{\mu\nu}\chi_\nu$. 
Recall the definition of the $\star$ operator on $\mathcal M$,
\begin{equation}
    \star f = \begin{cases}
        \frac1{(n-p-q)!} g^{n-p-q}\tilde a,&p+q\le n,\\
        \frac{1}{(p+q-n)!}\trg^{p+q-n}\tilde a, &p+q>n,
    \end{cases}
\end{equation}
where $a\in\mathcal O^{p,q}_\mathcal M$. The following lemmas will lay the foundation to prove the following
\begin{theorem}\label{thm:star-inv}
    The $\star$ operator is an isomorphism $\star:\mathcal O^{p,q}_\mathcal M \xrightarrow{\sim} \mathcal O^{n-p,n-q}_\mathcal M$. Its eigenvalues in $\mathcal O_\mathcal M\otimes\mathbb C$ are Gaussian integers, that is, $\sigma(\star)\subset\mathbb Z[i]$.
\end{theorem}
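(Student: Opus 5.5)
The plan is to reduce everything to a pointwise, fibrewise statement and then use $\mathfrak{sl}_2$ representation theory. All operators involved ($g\cdot$, $\trg$, and $a\mapsto\tilde a$) are $C^\infty(M)$-linear, so it suffices to work at a point $x\in M$. There $\mathcal O_\mathcal M$ becomes $\extp V^*\otimes\extp V^*$ with $V=T_xM$, carrying the nondegenerate form $g_x$. I would choose a $g_x$-orthonormal coframe, so that
\[
g=\sum_\mu\epsilon_\mu\theta^\mu\tilde\theta^\mu,\qquad \trg=\sum_\mu\epsilon_\mu\pdiff{}{\theta^\mu}\pdiff{}{\tilde\theta^\mu},\qquad \epsilon_\mu=\pm1.
\]
Set $L=g\cdot(-)$, $\Lambda=\trg$ and $H=N_\theta+N_{\tilde\theta}-n$, where $N$ denotes the degree counting operator.

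\textbf{Step 1 (the $\mathfrak{sl}_2$-triple).} I will check $[\Lambda,L]=\pm H$ and $[H,L]=2L$, $[H,\Lambda]=-2\Lambda$; the $\epsilon_\mu^2=1$ factors make the signature irrelevant. Since $\theta^\mu\tilde\theta^\mu$ is even, $L$ and $\Lambda$ are even operators, so the usual commutator suffices. A quick check on one pair $(\theta^\mu,\tilde\theta^\mu)$ gives $1-N_{\theta^\mu}-N_{\tilde\theta^\mu}$, and summing over $\mu$ yields $\pm H$. After fixing the overall sign, the fibre becomes a finite-dimensional $\mathfrak{sl}_2$-module, and $\mathcal O^{p,q}$ sits inside the weight space of weight $p+q-n$.

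\textbf{Step 2 (transposition).} I will verify the following facts about transposition:
\begin{itemize}
\item it is an involution: $\tilde{\tilde a}=a$, which is immediate from $\tilde a_{MN}=(-1)^{|M||N|}a_{NM}$;
\item it is an algebra automorphism;
\item it satisfies $\tilde g=-g$;
\item it satisfies $\widetilde{\trg a}=\pm\trg\tilde a$, with a sign depending only on the degrees.
\end{itemize}
Consequently transposition commutes with $L$ and $\Lambda$ up to explicit signs and preserves weights.

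\textbf{Step 3 (computing $\star^2$).} Take $a\in\mathcal O^{p,q}$ with $k=n-p-q\ge0$. Then $\star a=\frac1{k!}L^k\tilde a$ has total degree $n+k$, so the second application uses the other branch, and
\[
\star^2a=\pm\frac{1}{(k!)^2}\Lambda^kL^k a .
\]
The case $p+q>n$ is symmetric. Next I decompose $a$ by Lefschetz decomposition, $a=\sum_j L^j v_j$, where each $v_j$ is primitive ($\Lambda v_j=0$) of weight $-m_j$. The weight constraint reads $-m_j+2j=-k$, i.e.\ $m_j-j=j+k$. The standard formula
\[
\Lambda L^{i}v=i(m-i+1)L^{i-1}v
\]
then gives
\[
\frac{1}{(k!)^2}\Lambda^kL^k(L^jv_j)=\binom{j+k}{k}\binom{m_j-j}{k}L^jv_j=\binom{j+k}{k}^2L^jv_j .
\]
This is nonzero precisely because $L^{j+k}v_j\neq0$ in the finite-dimensional module, since $j+k\le m_j$. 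Hence $\star^2$ is diagonalizable on each $\mathcal O^{p,q}\oplus\mathcal O^{n-p,n-q}$ with eigenvalues $\pm\binom{j+k}{k}^2\neq0$.

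\textbf{Step 4 (conclusion).} Nonvanishing of these eigenvalues gives invertibility of $\star^2$, so $\star:\mathcal O^{p,q}\to\mathcal O^{n-p,n-q}$ is injective. Since the dimensions agree, it is an isomorphism, and this holds globally because it holds pointwise and smoothly. For the spectrum, $\star$ preserves each isotypic block and squares to a scalar $\pm N^2$ there. Therefore its eigenvalues on $\mathcal O_\mathcal M\otimes\mathbb C$ are $\pm N$ or $\pm iN$, all of which lie in $\mathbb Z[i]$.

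The main obstacle I expect is the sign bookkeeping in Steps 1–2: the overall sign of $[\Lambda,L]$, and the signs arising when transposition passes through $L^k$ and $\Lambda^k$. This bookkeeping only affects the $\pm$ in front of the squared binomials, and hence whether an eigenvalue is real or imaginary; the Gaussian-integer claim is robust to it. I would settle the signs by checking the case $n=1$, with generators $\theta,\tilde\theta$, by hand and then using the tensor-product structure over orthonormal directions.
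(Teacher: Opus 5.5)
Your proposal is correct and follows essentially the paper's route: the $\mathfrak{sl}_2$-triple built from $g$, $\trg$ and $\deg-n$, the Lefschetz decomposition into $g^j$ times primitive (traceless) pieces, and evaluation of $\trg^k g^{k+j}$ on primitives, which gives squared binomial eigenvalues of $\star^2$. Your pointwise reduction, your use of the lowest-weight string formula in place of the $U(\mathfrak{sl}_2)$ straightening identity, and your dimension count are only minor variations. The signs you defer are indeed harmless: with left derivatives one gets $\tilde g=-g$, $\widetilde{\trg a}=-\trg\tilde a$ and $[\trg,g]=\deg-n$, so the two factors $(-1)^k$ cancel and $\star^2$ acts on $g^jv_j$ by $+\binom{j+k}{k}^2$.
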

We break up this problem into several intermediate results.
In particular, it turns out that this problem may be reduced to properties of representations of $\mathfrak{sl}_2$. Recall to that end that over any field of characteristic $\neq 2$, $\mathfrak{sl}_2$ is generated by $E,F,H$, subject to the relations
\begin{equation}
    [E,F]=H,\quad [H,E]= 2E,\quad [H,F] = -2F.
\end{equation}
More concretely, one may identify
\begin{equation}
    E =\begin{pmatrix}
        0&1\\0&0
    \end{pmatrix},\quad
    F =\begin{pmatrix}
        0&0\\1&0
    \end{pmatrix},\quad
    H =\begin{pmatrix}
        1&0\\0&-1
    \end{pmatrix}.
\end{equation}
\begin{lemma}\label{lemma:sl2-rep}
    Let $\dim(M) = n$.
    The algebra of operators $g,\trg,\deg - n$ on $\mathcal M$ satisfies $\mathfrak{sl}_2$ commutation relations, that is,
    \begin{equation}
        [g,\trg] = \deg - n,\quad [\deg-n, g] = 2g,\quad [\deg-n,\trg] = -2\trg.
    \end{equation}
    In other words, $\rho:\mathfrak{sl}_2 \to \operatorname{DiffOp}(\mathcal M)$, defined by
    \begin{equation}
        \rho(E) = g,\quad \rho(F)=\trg,\quad \rho(H) = \deg - n,
    \end{equation}
    is a Lie algebra representation.
\end{lemma}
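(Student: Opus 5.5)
The plan is to verify the three commutation relations directly, in a single local chart $(x^\mu,\theta^\nu,\tilde\theta^\lambda)$. Here $g$ acts as the multiplication operator by the structure function $g_{\mu\nu}\theta^\mu\tilde\theta^\nu$, $\trg$ is the second order operator of equation~\eqref{eq:square-operator}, and $\deg$ is the total degree operator, i.e.\ the Euler vector field $\theta^\mu\partial_{\theta^\mu}+\tilde\theta^\mu\partial_{\tilde\theta^\mu}$. All three operators are globally defined, since they come from the Poisson structure and the grading, and the commutators are again differential operators. So it suffices to check the identities on a chart. All three operators are even, so every bracket is an ordinary commutator, and no $x$-derivatives appear, so $g_{\mu\nu}(x)$ behaves as a constant throughout.

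The two relations involving $\deg$ are immediate. Multiplication by $g$ raises the total degree by $2$, so on a homogeneous $a$ we get $\deg(ga)=ga\cdot(|a|+2)$, giving $[\deg-n,g]=2g$. Likewise $\trg$ lowers the total degree by $2$, giving $[\deg-n,\trg]=-2\trg$. The constant $n$ commutes with everything.

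The substantive step is $[g,\trg]$. I would expand $\trg(ga)$ with the graded Leibniz rule, applying $\partial_{\tilde\theta^\nu}$ first and then $\partial_{\theta^\mu}$. The derivatives of the structure function are
\[
\partial_{\tilde\theta^\nu}g=-g_{\alpha\nu}\theta^\alpha,\qquad \partial_{\theta^\mu}g=g_{\mu\beta}\tilde\theta^\beta .
\]
Contracting with $g^{\mu\nu}$ leaves four terms:
\begin{itemize}
\item a term $-g^{\mu\nu}g_{\mu\nu}\,a=-n\,a$, where both derivatives hit $g$;
\item the terms $\theta^\alpha\partial_{\theta^\alpha}a$ and $\tilde\theta^\beta\partial_{\tilde\theta^\beta}a$, where one derivative hits $g$ and the contraction $g^{\mu\nu}g_{\alpha\nu}=\delta^\mu_\alpha$ collapses the indices;
\item the term $g\,\trg a$.
\end{itemize}
The cross terms sum to $\deg a$, so $[\trg,g]=\deg-n$ up to an overall sign.

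I expect the main obstacle to be this sign bookkeeping. It depends on the ordering convention $\theta\tilde\theta$ in $g$ and on whether $\partial_\theta$ or $\partial_{\tilde\theta}$ acts first in $\trg$. A quick computation with the conventions as written suggests $[g,\trg]=n-\deg$. If that persists after careful checking, I would either reorder the factors in $\trg$ or use the representation $\rho(F)=-\trg$. This still satisfies the $\mathfrak{sl}_2$ relations and does not affect the later use of the lemma, since $\star$ only involves powers of $\trg$ up to a nonzero scalar. Once the bracket is fixed, the statement that $\rho$ is a Lie algebra representation follows, because the three brackets match $[E,F]=H$, $[H,E]=2E$ and $[H,F]=-2F$.
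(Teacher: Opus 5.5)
You follow the paper's own route: degree counting for the two brackets involving $\deg-n$, and a direct chart computation for $[g,\trg]$, which the paper only asserts. Your sign check is correct and needs no hedging: your four terms give exactly $\trg(ga)=g\,\trg a+(\deg a-n)\,a$ (for $n=1$, $a=1$ one gets $\trg g=-1$), so with the conventions as written $[g,\trg]=n-\deg$, and the statement holds only after taking $\rho(F)=-\trg$ (equivalently, reversing the order of the derivatives in $\trg$); as you anticipated, this changes the later $\star^2$ eigenvalues only by a factor $(-1)^k$, so the invertibility of $\star$ is unaffected.
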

\begin{proof}
    The first commutation relation may be verified by means of a direct computation. For the second and third relations, note that $g$ increases the degree by $2$, whereas $\trg$ lowers the degree by $2$.
\end{proof}

\begin{lemma}\label{lemma:lefschetz-decomp}
    Let $a\in\mathcal O^{p,q}_\mathcal M$ with $p+q< n$. There exists a unique decomposition $a = a_0 + g a_1$, where $\trg a_0 = 0$. Equivalently, \begin{equation}
        \mathcal O_\mathcal M^{p,q} = \left(\mathcal O^{p,q}_\mathcal M\right)_0 \oplus g\mathcal O_\mathcal M^{p-1,q-1},
    \end{equation}
    where $\left(O^{p,q}_\mathcal M\right)_0 = \{b\in\mathcal O^{p,q}_\mathcal M:\trg b = 0\}$ denotes the traceless part.
    Whenever $p+q>n$, we analogously have the decomposition
    \begin{equation}
        \mathcal O_\mathcal M^{p,q} = \left(\mathcal O^{p,q}_\mathcal M\right)_0 \oplus \trg \mathcal O_\mathcal M^{p+1,q+1},
    \end{equation}
    where $\left(\mathcal O^{p,q}_\mathcal M\right)_0 = \{b\in\mathcal O^{p,q}_\mathcal M : gb = 0\}$ denotes the part with vanishing cotrace.
\end{lemma}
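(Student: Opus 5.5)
The plan is to reduce the statement to finite-dimensional $\mathfrak{sl}_2$ representation theory via Lemma~\ref{lemma:sl2-rep}, and then to write down the decomposition by an explicit formula. The formula makes smoothness in $x$ and global well-definedness automatic.

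\textbf{Reduction to a fibrewise problem.} The operators $g$ (multiplication by $g_{\mu\nu}\theta^\mu\tilde\theta^\nu$), $\trg = g^{\mu\nu}\pdiff{}{\theta^\mu}\pdiff{}{\tilde\theta^\nu}$ and $\deg - n$ contain no $x$-derivatives. Hence they are $C^\infty(U)$-linear, and pointwise in $x$ they act on the finite-dimensional space $V=\extp(\mathbb R^n)^*\otimes\extp(\mathbb R^n)^*$. The operator $\deg$ acts on $\mathcal O^{p,q}_\mathcal M$ as $p+q$, so $H=\deg-n$ acts there as $k=p+q-n$. The bidegree difference $p-q$ commutes with $g$ and $\trg$, because each changes both degrees by $\pm1$. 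Therefore every weight space can be split further by bidegree, and each bidegree component is an $H$-eigenspace. In a fixed bidegree, $g$ maps $\mathcal O^{p-1,q-1}_\mathcal M\to\mathcal O^{p,q}_\mathcal M$ and $\trg$ maps the other way.

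\textbf{Key step: invertibility of $\trg\, g$ on negative weights.} Take $p+q<n$, so $k<0$. I would show that $\trg\, g$ is invertible on $\mathcal O^{p-1,q-1}_\mathcal M$, the weight space of weight $k-2<0$. By complete reducibility of finite-dimensional $\mathfrak{sl}_2$-representations, it suffices to check this on an irreducible $V(m)$. On a weight-$j$ vector, $FE$ acts by the scalar $\tfrac14(m-j)(m+j+2)$, obtained from the Casimir together with $FE=EF-H$. This scalar vanishes only when $j=m$ or $j=-m-2$. Neither can happen for $j=k-2<0$ in $V(m)$, since $j<0\le m$ and $j\ge -m$. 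So $\trg\, g$ is invertible on each fibre.

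The inverse is a polynomial in $\trg\, g$ by Cayley--Hamilton, with coefficients that are smooth in $x$. Indeed, the eigenvalues are positive integers independent of $x$, so the characteristic polynomial has constant coefficients. Hence $(\trg\, g)^{-1}$ is again a smooth $C^\infty$-linear operator.

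\textbf{Construction and uniqueness.} Given $a\in\mathcal O^{p,q}_\mathcal M$, set
\begin{equation}
a_1=(\trg\, g)^{-1}\trg a,\qquad a_0=a-g a_1 .
\end{equation}
Then $\trg a_0=\trg a-\trg g a_1=0$, which gives existence. For uniqueness, suppose $a_0+g a_1=0$ with $\trg a_0=0$. Applying $\trg$ gives $\trg g a_1=0$, so $a_1=0$ by the key step, and then $a_0=0$. This proves the direct sum decomposition. Since $g$ and $\trg$ are globally defined, the local decompositions agree on overlaps by uniqueness, so the decomposition is global.

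\textbf{The case $p+q>n$.} This follows by the same argument with the roles of $E$ and $F$ exchanged (the Weyl symmetry of $\mathfrak{sl}_2$). Now the weight is $k>0$, and one inverts $g\,\trg$ on $\mathcal O^{p+1,q+1}_\mathcal M$, which has weight $k+2>0$. On a weight-$j$ vector in $V(m)$, $EF$ acts by $\tfrac14(m+j)(m-j+2)$, which is nonzero for $0<j\le m$. One then sets $a_1=(g\,\trg)^{-1}g a$ and $a_0=a-\trg a_1$.

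\textbf{Main obstacle.} The only delicate point is the eigenvalue computation for $FE$ and $EF$ on each irreducible summand, together with keeping track of which weights actually occur. This is routine once Lemma~\ref{lemma:sl2-rep} is available.
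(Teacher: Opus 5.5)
Your proof is correct, but it takes a different route from the paper. The paper argues by induction on $\min(p,q)$. It writes elements of $\mathcal O^{p+1,q}_\mathcal M$ as sums of products $bc$ with $b\in\mathcal O^{1,0}_\mathcal M$, splits $c=c_0+gc_1$ by the induction hypothesis, and then makes $bc_0$ traceless. It does this by subtracting $g$ times a multiple of the contraction $\pdiff{}{b}c_0$, with coefficient $\pm1/(n-p-q-1)$ coming from the commutator of $\trg$ and $g$. The case $p+q>n$ is only sketched as ``analogous''. Uniqueness, i.e.\ that $g\mathcal O^{p-1,q-1}_\mathcal M$ meets the traceless part trivially, is not argued separately.

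You instead prove once, via complete reducibility and the Casimir, that $\trg g$ is invertible on the relevant negative weight space. This gives the explicit projector $a\mapsto a-g(\trg g)^{-1}\trg a$ and settles existence and uniqueness together. The case $p+q>n$ then follows by the $E\leftrightarrow F$ symmetry. The paper's induction is more elementary and produces concrete contraction formulas without Weyl's theorem. Your argument is shorter, proves both halves including directness of the sum, and uses the same $\mathfrak{sl}_2$ structure the appendix needs later for $\star^2$ anyway.

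Two minor points. First, with the paper's conventions (left derivatives, $g=g_{\mu\nu}\theta^\mu\tilde\theta^\nu$) one actually has $[\trg,g]=\deg-n$; for example $\trg g\,1=-n$. This is also the relation the paper's own proof implicitly uses. So the correct $\mathfrak{sl}_2$-triple is $(g,-\trg,\deg-n)$, and $\trg g=-FE$ has negative rather than positive integer eigenvalues. This changes nothing in your argument. Second, the Cayley--Hamilton step is more than you need. $\trg g$ is a smooth, pointwise invertible $C^\infty$-linear bundle endomorphism, so its inverse is smooth by Cramer's rule. As stated, your claim that the characteristic polynomial is constant would also require the eigenvalue multiplicities to be $x$-independent, which is true but unproved.
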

\begin{proof}
    We prove the first decomposition by induction on $\min(p,q)$. Without loss of generality, let $p\le q$. The base case $p=0$ is clear as all $a\in\mathcal O^{0,q}_\mathcal M$ satisfy $\trg a = 0$, and $\mathcal O^{-1,q-1}_\mathcal M = \{0\}$. For the induction step, note that $\mathcal O^{p+1,q}_\mathcal M$ is generated by functions of the form $bc$, where $b\in\mathcal O^{1,0}_\mathcal M$ and $c\in\mathcal O_\mathcal M^{p,q}$. It obviously holds $\trg b = 0$. Using the induction hypothesis, we have
    \begin{equation}
        bc = b(c_0 + g c_1) = bc_0 + gbc_1.
    \end{equation}
    It thus remains to find the decomposition $bc_0 = (bc_0)_0 + g(bc_0)_1$. To that end, note that
    \begin{equation}
        \trg bc_0 = -\pdiff{}{b} c_0,
    \end{equation}
    where $\pdiff{}{b} = g^{\mu\nu} b_\mu\pdiff{}{\theta^\nu}$. Note further that $[\trg,\pdiff{}{b}] = 0$. Furthermore, we have
    \begin{equation}
        (p+q+1-n) \trg (b c_0) = [\trg, g] bc_0 = \trg(gbc_0) - g\trg\pdiff{}{b}c_0 = \trg(gbc_0),
    \end{equation}
    such that $(bc_0)_0 = bc_0 + \frac{1}{n-p-q-1}gbc_0$ and $(bc_0)_1 = \frac{-1}{n-p-q-1}bc_0$.\par
    To prove the second result, proceed by downwards induction on $\max(p,q)$. Clearly, if $\max(p,q) = n$, it holds $ga = 0$ for all $a\in\mathcal O^{p,q}_\mathcal M$, proving the base case. Without loss of generality, assume $p>q$. For the induction step, note that $\mathcal O^{p-1,q}_\mathcal M$ consists of linear combinations of the type $b^\mu\pdiff{}{\theta^\mu} c$ for $c\in\mathcal O^{p,q}_\mathcal M$. For the full induction, proceed as above. 
\end{proof}
\begin{corollary}
    For $p+q<n$, we have the decomposition
    \begin{equation}
        \mathcal O^{p,q}_\mathcal M = \bigoplus_{k=0}^{\min(p,q)} g^k \left(\mathcal O^{p-k,q-k}_\mathcal M\right)_0.
    \end{equation}
    For $p+q>n$, we have the decomposition
    \begin{equation}
        \mathcal O^{p,q}_\mathcal M = \bigoplus_{k=0}^{n-\max(p,q)} \trg^k \left(\mathcal O^{p+k,q+k}_\mathcal M\right)_0.
    \end{equation}
\end{corollary}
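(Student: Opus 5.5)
The most recent statement is the Corollary: for $p+q<n$, $\mathcal O^{p,q}_\mathcal M = \bigoplus_{k=0}^{\min(p,q)} g^k (\mathcal O^{p-k,q-k}_\mathcal M)_0$, and for $p+q>n$ the analogous decomposition holds with $\trg$ and cotrace-free pieces. The plan is to iterate Lemma~\ref{lemma:lefschetz-decomp}, inducting on $\min(p,q)$ in the first case and on $n-\max(p,q)$ in the second.

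For $p+q<n$, Lemma~\ref{lemma:lefschetz-decomp} gives
\[
\mathcal O^{p,q}_\mathcal M=(\mathcal O^{p,q}_\mathcal M)_0\oplus g\,\mathcal O^{p-1,q-1}_\mathcal M .
\]
Since $(p-1)+(q-1)<n$, the lemma applies again to $\mathcal O^{p-1,q-1}_\mathcal M$. Substituting it in gives the next summand $g(\mathcal O^{p-1,q-1}_\mathcal M)_0$ and a remainder $g^2\mathcal O^{p-2,q-2}_\mathcal M$. The process ends once one index reaches $0$, because $\mathcal O^{-1,\cdot}_\mathcal M=\{0\}$. This yields the sum for $k=0,\dots,\min(p,q)$.

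The point needing care is that the sum stays direct after multiplying by powers of $g$. The lemma only gives $X=X_0\oplus gY$. To conclude $gY=g Y_0\oplus g^2 Z$ with a direct sum, we need multiplication by $g$ to be injective on $\mathcal O^{p-1,q-1}_\mathcal M$. I will get this from the $\mathfrak{sl}_2$ relations of Lemma~\ref{lemma:sl2-rep}. On degree $d=p+q-2<n$, $H$ acts by the negative scalar $d-n$. For a primitive (traceless) $b$, the identity $[g,\trg]=\deg-n$ gives inductively that $\trg^k g^k b$ is a nonzero multiple of $b$: $\trg g^{k}b=-k\,(d-n+k-1)g^{k-1}b$, and this coefficient is nonzero whenever $2k+d\le n$ roughly. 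Iterating, every element of $\mathcal O^{p-1,q-1}_\mathcal M$ is a combination of $g^j b_j$ with primitive $b_j$, and $g$ is injective on it as long as the image degree stays $\le n$. The image degree is $p+q<n$, so this holds. Hence $g$ maps direct sums to direct sums and the decomposition is direct.

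For $p+q>n$ the argument is the same with $\trg$ and $g$ exchanged. Repeatedly apply the second half of Lemma~\ref{lemma:lefschetz-decomp}, raising the bidegree by $(1,1)$ each time. This stops when $\max(p,q)+k$ reaches $n$, since beyond that $\mathcal O^{p+k+1,q+k+1}_\mathcal M=0$. Directness again comes from injectivity of $\trg$ in the relevant degrees, obtained by the symmetric $\mathfrak{sl}_2$ computation with $H$ positive.

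I expect the main obstacle to be this injectivity and directness bookkeeping, namely checking that the scalar coefficients never vanish in the allowed range. The rest is a straightforward induction.
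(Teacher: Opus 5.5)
Your proposal is correct and follows the same route as the paper: the paper's proof is just the remark that the corollary follows by inductively applying Lemma~\ref{lemma:lefschetz-decomp}. Your extra injectivity argument for $g$ (resp.\ $\trg$) in the relevant degrees, based on $\trg g^k b=-k(d-n+k-1)\,g^{k-1}b$, explains why the iterated sum stays direct, which the paper leaves tacit; the one routine step you leave implicit is applying $\trg^{J+1}$ to $\sum_{j\le J} g^{j+1}b_j$ to isolate the top component $b_J$.
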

\begin{proof}
    This follows by inductively applying lemma~\eqref{lemma:lefschetz-decomp}.
\end{proof}
We prove invertibility of $\star$ by showing that $\star^2$ is invertible. To that end, we require the commutation relations of $g^k$ and $\trg^\ell$. To evaluate these, we may go to the universal enveloping algebra $U(\mathfrak{sl}_2)$ and use the following
\begin{lemma}\label{lemma:straightening}
    Let $E,F,H$ be the generators of $\mathfrak{sl_2}$ as before. In the universal envelope $U(\mathfrak{sl}_2)$, it holds
    \begin{equation}
        E^k F^\ell = \sum_{j=0}^{\min (k,\ell)} \frac{\ell!}{(\ell - j)!}F^{\ell-j} \binom{H-k-\ell+2j}{j}\frac{k!}{(k-j)!}E^{k-j},
    \end{equation}
    as well as
    \begin{equation}
        F^k E^\ell = \sum_{j=0}^{\min (k,\ell)} \frac{\ell!}{(\ell - j)!}E^{\ell-j} \binom{-H-k-\ell+2j}{j}\frac{k!}{(k-j)!}F^{k-j}.
    \end{equation}
\end{lemma}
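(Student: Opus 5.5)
The plan is to prove the first identity by induction on $k$, with $\ell$ fixed, and then to obtain the second from the first by an automorphism of $U(\mathfrak{sl}_2)$. Throughout, $\binom{X}{j}=X(X-1)\cdots(X-j+1)/j!$ denotes a polynomial in $H$ with rational coefficients. All polynomials in $H$ commute with one another, so the order of such factors never matters.

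\textbf{Preliminary identities.} The first step is to record three facts in $U(\mathfrak{sl}_2)$. The relations $[H,E]=2E$ and $[H,F]=-2F$ give $HE=E(H+2)$ and $HF=F(H-2)$. Hence, for every polynomial $f$,
\begin{equation}
f(H)E=Ef(H+2),\qquad f(H)F=Ff(H-2),
\end{equation}
which we use in the form $Ef(H)=f(H-2)E$. Next, an induction on $m$ using $[E,F]=H$ gives
\begin{equation}
EF^m=F^mE+mF^{m-1}(H-m+1).
\end{equation}
For the inductive step, write $EF^{m+1}=(F^mE+mF^{m-1}(H-m+1))F$ and move $F$ to the left through $H-m+1$ using the second fact. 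The base case $k=0$ of the lemma is trivial, and $k=\ell=1$ recovers $EF=FE+H$.

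\textbf{Induction step.} Assume the formula for $E^kF^\ell$ and multiply it on the left by $E$. In the $j$-th summand, commute $E$ past $F^{\ell-j}$ using the identity for $EF^m$, and then past $\binom{H-k-\ell+2j}{j}$ using $Ef(H)=f(H-2)E$. Put $Y=H-k-\ell+2j-2$ and collect the coefficient of $F^{\ell-j}(\cdots)E^{k+1-j}$. Two terms contribute:
\begin{itemize}
\item the $j$-th term with $E$ carried through, which gives $\tfrac{\ell!}{(\ell-j)!}\tfrac{k!}{(k-j)!}\binom{Y}{j}$;
\item the commutator part of the $(j-1)$-th term, which gives $\tfrac{\ell!}{(\ell-j)!}\tfrac{k!}{(k-j+1)!}(H-\ell+j)\binom{Y}{j-1}$, where $H-\ell+j=Y+k-j+2$.
\end{itemize}
Extracting the common factor $\tfrac{\ell!}{(\ell-j)!}\tfrac{k!}{(k-j+1)!}$, it remains to prove the polynomial identity
\begin{equation}
(k-j+1)\binom{Y}{j}+(Y+k-j+2)\binom{Y}{j-1}=(k+1)\binom{Y+1}{j}.
\end{equation}
This follows from Pascal's rule $\binom{Y}{j}+\binom{Y}{j-1}=\binom{Y+1}{j}$ together with $(Y-j+1)\binom{Y}{j-1}=j\binom{Y}{j}$; both are polynomial identities in $Y$, hence valid for $Y$ a polynomial in $H$. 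Finally, $Y+1=H-(k+1)-\ell+2j$, so the result matches the claimed coefficient for $k+1$. The upper summation limit becomes $\min(k+1,\ell)$ automatically, since the factor $\tfrac{\ell!}{(\ell-j)!}$ vanishes for $j>\ell$ and the $(k+1)!/(k+1-j)!$ structure handles $j\le k+1$.

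\textbf{Second identity.} The map $E\mapsto F$, $F\mapsto E$, $H\mapsto -H$ preserves the defining relations $[E,F]=H$ and $[H,E]=2E$, $[H,F]=-2F$. It therefore extends to an algebra automorphism of $U(\mathfrak{sl}_2)$. Applying it to the first identity gives the second verbatim.

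\textbf{Main obstacle.} The only delicate point is the bookkeeping in the induction step. One has to shift the argument of the binomial by $-2$ whenever $E$ is moved to the right past it, and to recognise that $H-\ell+j$ equals $Y+k-j+2$. Once that is done, the remaining content is the two elementary binomial identities above.
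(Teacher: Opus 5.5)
Your proof is correct and follows the paper's route. The paper also proves the first identity by induction combined with elementary binomial identities (deferring the details to Humphreys), and obtains the second by applying the automorphism $E\mapsto F$, $F\mapsto E$, $H\mapsto -H$. Your explicit version checks out, including the boundary terms $j=0$ and $j=k+1$: you induct on $k$ alone, using the $k=1$ case $EF^m=F^mE+mF^{m-1}(H-m+1)$ as the auxiliary step and reducing to the polynomial identity $(k-j+1)\binom{Y}{j}+(Y+k-j+2)\binom{Y}{j-1}=(k+1)\binom{Y+1}{j}$.
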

\begin{proof}
    The first formula may be proven by induction on $k$ and $\ell$, together with standard identities for binomial coefficients. The details are spelled out in~\cite{humphreys1972representation}.\par
    The second formula follows from the first after applying the Lie algebra automorphism
    \begin{equation}
    \kappa:\mathfrak{sl}_2\to\mathfrak{sl}_2,\quad
        \kappa(E) = F,\quad \kappa(F) = E,\quad \kappa(H) = -H.
    \end{equation}
\end{proof}
By using the decomposition lemma~\eqref{lemma:lefschetz-decomp}, it suffices to compute $\star^2$ on functions of the kind $g^k a$ with $\trg a = 0$, and functions of the kind $\trg^k a$ with $ga=0$. 

\begin{lemma}
    For $a = g^m b \in\mathcal O_\mathcal M^{p,q}$ with $p+q< n$ and $\trg b=0$, it holds
    \begin{equation}\label{eq:star2-1}
        \star^2 a = (-1)^{m+n-p-q}\binom{m+n-p-q}{n-p-q}^2 a.
    \end{equation}
    For $a = \trg^m b \in\mathcal O_\mathcal M^{p,q}$ with $p+q> n$ and $gb=0$, it holds
    \begin{equation}
        \star^2 a = (-1)^{m+p+q-n}\binom{m+p+q-n}{p+q-n}^2 a.
    \end{equation}
    For $a\in\mathcal O^{p,q}_\mathcal M$ with $p+q=n$, it holds $\star^2a = a$.    
\end{lemma}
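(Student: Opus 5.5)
The plan is to reduce everything to the $\mathfrak{sl}_2$ representation of Lemma~\ref{lemma:sl2-rep}, with $E=g$, $F=\trg$, $H=\deg-n$, and to the elementary properties of the transposition. I will first record three facts about $a\mapsto\tilde a$:
\begin{itemize}
\item it is an algebra involution of $\mathcal O_\mathcal M$, since it is defined on generators and respects graded commutativity, and it maps $\mathcal O^{p,q}_\mathcal M$ to $\mathcal O^{q,p}_\mathcal M$;
\item $\tilde g=-g$;
\item conjugation by $\tilde{\ }$ sends $\trg$ to $\pm\trg$, because it exchanges $\pdiff{}{\theta}$ with $\pdiff{}{\tilde\theta}$ and these anticommute. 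In particular the traceless condition $\trg b=0$ is preserved, and so is $gb=0$.
\end{itemize}
The middle case is then immediate. If $p+q=n$, then $\star a=\tilde a\in\mathcal O^{q,p}_\mathcal M$ again has total degree $n$, so $\star^2a=\tilde{\tilde a}=a$.

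For the first case, write $a=g^m b$ with $\trg b=0$ and $b\in\mathcal O^{p-m,q-m}_\mathcal M$, and set $N=n-p-q>0$. I will proceed in four steps.
\begin{itemize}
\item Compute $\star a=\frac{1}{N!}g^N\tilde a=\frac{(-1)^m}{N!}g^{N+m}\tilde b$.
\item This has total degree $2n-p-q>n$, with excess exactly $N$. So the second branch of the definition gives
\begin{equation*}
\star^2a=\frac{1}{N!}\trg^N\widetilde{\star a}=\frac{(-1)^{N}}{(N!)^2}\trg^N g^{N+m}b,
\end{equation*}
up to the sign bookkeeping discussed below.
\item Observe that $b$ is a lowest weight vector: it is annihilated by $F=\trg$, and $H$ acts on it by $-\mu$ with $\mu=N+2m$.
\item Use Lemma~\ref{lemma:straightening}, or directly the one-step identity $FE^kv=k(\mu-k+1)E^{k-1}v$ (proved by moving $F$ through using $[E,F]=H$), and iterate it $N$ times. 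This gives
\begin{equation*}
\trg^Ng^{N+m}b=\prod_{k=m+1}^{N+m}k(\mu-k+1)\,g^mb=\Big(\tfrac{(N+m)!}{m!}\Big)^2g^mb.
\end{equation*}
\end{itemize}
Dividing by $(N!)^2$ yields $\binom{m+N}{N}^2a$, which is the claimed magnitude.

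The case $p+q>n$ is handled in the same way with the roles of $g$ and $\trg$ exchanged. Here $b$ is a highest weight vector ($gb=0$), $\star$ first applies $\trg^{p+q-n}$, and the second application of $\star$ multiplies by $g^{p+q-n}$. The automorphism $\kappa$ from Lemma~\ref{lemma:straightening} transfers the straightening identity to this setting, giving $\binom{m+p+q-n}{p+q-n}^2$.

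The main obstacle is the sign $(-1)^{m+N}$. My naive count above produces only $(-1)^N$. Additional signs can enter from three places:
\begin{itemize}
\item the conjugation $\widetilde{\trg\,\cdot}=\pm\trg\,\tilde{\cdot}$, which contributes a power depending on $N$;
\item reordering $g^{N+m}$ past $\tilde b$, which costs nothing since $g$ is even;
\item the precise normalization in $[g,\trg]=\deg-n$, whose order and sign must match the graded ordering of $\pdiff{}{\theta}\pdiff{}{\tilde\theta}$ acting on $\theta\tilde\theta$.
\end{itemize}
I would settle the sign by checking the smallest nontrivial example, $n=2$ and $a=g$ (so $m=1$, $N=0$), directly in coordinates. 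I would then fix the sign of each of the three ingredients once and for all before assembling the general formula. Since the theorem only needs $\star^2$ to be invertible with eigenvalues in $\mathbb Z[i]$, the magnitude computation is the essential content, and the sign is bookkeeping.
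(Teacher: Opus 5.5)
\textbf{There is a genuine gap, and it cannot be closed: the sign $(-1)^{m+N}$ in the statement is false.} Your route is the paper's route (transposition signs, then lowest-weight straightening for $\mathfrak{sl}_2$), and your magnitude $\binom{m+N}{N}^2$ is right. But the proposal leaves the sign unproved, and it is not bookkeeping.

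Your intermediate identity $\star^2 a=\frac{(-1)^N}{(N!)^2}\trg^N g^{N+m}b$ is exactly correct. The factor $(-1)^m$ from $\widetilde{g^m b}=(-1)^m g^m\tilde b$ is cancelled when $\star a$ is transposed a second time, because that transposition flips the sign of $g^{N+m}$. The paper's proof writes the prefactor as $(-1)^{k+m}/(k!)^2$, and this is where the spurious $m$-dependence enters.

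None of your three candidate sources can restore it:
\begin{itemize}
\item In the first case $\trg$ is never conjugated by the transposition.
\item Reordering past even elements is free.
\item The normalization of $[g,\trg]$ only produces a power of $N$.
\end{itemize}
That last item does matter, but in the other direction. With $\trg=g^{\mu\nu}\pdiff{}{\theta^\mu}\pdiff{}{\tilde\theta^\nu}$ and left derivatives, $\pdiff{}{\tilde\theta^\nu}(\theta^\mu\tilde\theta^\lambda)=-\delta^\lambda_\nu\theta^\mu$. Hence $\trg g=-n$ and $[\trg,g]=\deg-n$, so the $\mathfrak{sl}_2$-triple is $(g,-\trg,\deg-n)$, not the one in Lemma~\ref{lemma:sl2-rep}. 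Then $\trg^N g^{N+m}b=(-1)^N\big(\tfrac{(N+m)!}{m!}\big)^2 g^m b$, and therefore $\star^2 a=+\binom{m+N}{N}^2 a$. If you instead normalize $\trg$ so that Lemma~\ref{lemma:sl2-rep} holds verbatim, you get exactly your naive $(-1)^N\binom{m+N}{N}^2$.

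The second case behaves the same way. Set $K=p+q-n$. The factor $(-1)^m$ from $\widetilde{\trg^m b}=(-1)^m\trg^m\tilde b$ is undone by the second transposition, leaving $\frac{(-1)^K}{(K!)^2}g^K\trg^{K+m}b$. The result is $\binom{m+K}{K}^2 a$ with the literal $\trg$, or $(-1)^K\binom{m+K}{K}^2 a$ with the other normalization.

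\textbf{Counterexample.} Take $n=4$ and $a=g$, so $m=1$, $b=1$, $N=2$. Then $\star g=\tfrac12 g^2\tilde g=-\tfrac12 g^3$, and $\star^2 g=\tfrac12\trg^2\big(-\tfrac12\tilde g^3\big)=\tfrac14\trg^2 g^3$. Straightening gives $\trg^2 g^3=36g$, so $\star^2 g=9g$. The lemma predicts $-9g$. Only $\trg^2$ enters, so this holds in either sign convention for $\trg$.

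Your planned sanity check ($n=2$, $a=g$) cannot detect this. It has $N=0$ and falls under the third case, where $\star^2 g=g$. Note that this already conflicts with $(-1)^{m+N}$ extrapolated to $N=0$, while both corrected formulas reduce to it.

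\textbf{What you can prove.} The corrected lemma: eigenvalues $\binom{m+N}{N}^2$ and $\binom{m+K}{K}^2$, up to the convention-dependent $(-1)^N$, $(-1)^K$. These are nonzero, so this still suffices for Theorem~\ref{thm:star-inv}, the only place the lemma is used.
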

\begin{proof}
    We prove the first case. With $k=n-p-q$, it holds
    \begin{equation}
        \star^2(g^m b) = \frac{(-1)^{k+m}}{(k!)^2}\trg^k (g^{k+m} b).
    \end{equation}
    We now use lemma~\eqref{lemma:straightening} with $g=\rho(E)$, $\trg = \rho(F)$, and $\ell = k+j$, and since $\trg b = 0$, only the term with $k=j$ contributes. Collecting all the combinatorial factors leads to equation~\eqref{eq:star2-1}.\par   
    The second case is analogous. For the third case, observe that $\star a = \tilde a$ whenever $\deg(a) = n$.
\end{proof}
By lemma~\eqref{lemma:lefschetz-decomp}, all functions on $\mathcal M$ can be written as linear combinations of such functions. This also shows that these functions are are precisely the eigenfunctions of $\star^2$. Since the eigenvalues of $\star^2$ are non-zero perfect squares in $\mathbb Z[i]$, this proves theorem~\eqref{thm:star-inv}.

\end{document}